\documentclass[11pt]{article}
\usepackage[T1]{fontenc}
\usepackage{lmodern}
\usepackage[margin=1in]{geometry}
\usepackage{amsmath,amssymb,amsthm}
\usepackage{microtype}
\usepackage{algorithm,algpseudocode}
\usepackage[hidelinks]{hyperref}
\usepackage{xcolor}

\newtheorem{theorem}{Theorem}[section]
\newtheorem{lemma}[theorem]{Lemma}
\newtheorem{corollary}[theorem]{Corollary}
\newtheorem{definition}[theorem]{Definition}

\newcommand{\poly}{\operatorname{poly}}
\newcommand{\MaxSAT}{\operatorname{MaxSAT}}
\newcommand{\TV}{\operatorname{TV}}
\newcommand{\BIS}{\#\mathrm{BIS}}
\newtheorem{question}{Question}

\numberwithin{equation}{section}
\allowdisplaybreaks[1]

\title{Separations between $\#\mathrm{BIS}$ and $\#\mathrm{IS}$}
\author{
Ijay Narang\thanks{School of Computer Science, Georgia Institute of Technology,
\texttt{inarang3@gatech.edu}}}
\date{}

\begin{document}
\maketitle

\begin{abstract}

We give complexity-theoretic separations between $\#\mathrm{BIS}$ and $\#\mathrm{IS}$. Given a graph $G$, and a fugacity $\lambda>0$, the hard-core model is given by $Z_G(\lambda)=\sum_{I\in\mathcal I(G)}\lambda^{|I|}$, where $\mathcal I(G)$ denotes the independent sets of $G$. For $F\ge1$, an $F$-factor approximation to $Z_G(\lambda)$ is an estimate $\widehat Z$ satisfying $F^{-1}Z_G(\lambda)\le\widehat Z\le FZ_G(\lambda)$. Let $\mathcal G_\Delta$ denote the class of $\Delta$-regular graphs and $\mathcal G_{\Delta,\Delta}$ the class of $\Delta$-regular bipartite graphs. We give two separation results.
\begin{itemize}
    \item There exist an integer $\Delta\ge3$, a sufficiently large fugacity $\lambda$, and a constant $c>0$ so that the following hold:
    \begin{itemize}
        \item For every $\delta\in(0,1)$, deterministic $(1+\delta)$-factor approximation on $\mathcal G_{\Delta,\Delta}$ is possible in time $\exp(cn)\poly(n,1/\delta)$.
        \item Under the Gap-Exponential Time Hypothesis, there exists no deterministic algorithm running in time $\exp(2cn)\poly(n)$ on $\mathcal G_\Delta$ that gives an $\frac{11}{10}$-factor approximation.
    \end{itemize} 
    \item For every fixed $\Delta\ge3$ and $\varepsilon\in(0,1)$, for all sufficiently large fugacities $\lambda$, the following hold:
    \begin{itemize}
        \item There exists an algorithm which outputs a deterministic $(1+\varepsilon)^n$-factor approximation to $Z_G(\lambda)$ on $\mathcal G_{\Delta,\Delta}$ in polynomial time.
        \item Under the assumption $\mathrm P\ne\mathrm{NP}$, there exists no deterministic polynomial-time algorithm on $\mathcal G_\Delta$ that gives a $(1+\varepsilon)^n$-factor approximation.
    \end{itemize}
\end{itemize}
Both our algorithmic results follow from a new spanning-tree encoding of independent sets that enables an efficient representation of large independent sets.
\end{abstract}
\par\vfill
\paragraph{A note on OpenAI's disproof of ETH.}
On October 26, 2026, OpenAI announced a refutation of the Exponential Time Hypothesis. We have nevertheless decided to release this paper as originally planned, because we believe our spanning-tree encoding provides a useful technique for approximate counting and sampling at high fugacity, even beyond bipartite graphs (see Corollary \ref{cor:near-half-extension}). Our polynomial-time separation result remains unaffected by this development.
\clearpage
\section{Introduction}\label{sec:introduction}

The hard-core model originated in statistical physics in the study of lattice gas systems and has since received considerable attention in theoretical computer science. Given a graph $G$ and a fugacity parameter $\lambda > 0$, the hard-core model $\mu_{G, \lambda}$ defines a distribution over all independent sets of $G$. Let $\mathcal{I}(G)$ denote the set of all independent sets of $G$. The hard-core model has the Gibbs distribution
\[\mu_{G,\lambda}(I) = \frac{\lambda^{|I|}}{Z_G(\lambda)}, \qquad Z_G(\lambda) = \sum_{I \in \mathcal{I}(G)} \lambda^{|I|}\]
where the normalizing constant $Z_G(\lambda)$ is referred to as the partition function.

Exact evaluation of $Z_G(\lambda)$ is $\#P$-hard in general~\cite{Valiant79},
motivating the study of efficient approximation algorithms. For $F\ge1$,
an $F$-factor approximation to $Z_G(\lambda)$ is an estimate $\widehat Z$ satisfying
\begin{equation}\label{eq:factor-approximation}
 F^{-1}Z_G(\lambda)\leq\widehat Z\leq FZ_G(\lambda).
\end{equation}
An FPTAS is a deterministic algorithm that, given $\delta\in(0,1)$,
returns a $(1+\delta)$-factor approximation in time polynomial in
$|V(G)|$ and $1/\delta$. An FPRAS is a randomized algorithm with the same
running-time bound whose estimate satisfies this guarantee with probability
at least $3/4$. A $\delta$-approximate sampler for $\mu_{G,\lambda}$
produces an independent set whose distribution $\widehat\mu$ satisfies
$\TV(\widehat\mu,\mu_{G,\lambda})\leq\delta$. Such a sampler is efficient
if its running time is polynomial in $|V(G)|$ and $1/\delta$.
Here $\TV$ denotes total variation distance.

Let $\mathcal{G}^{\leq}_\Delta$ denote the set of all graphs of maximum
degree at most $\Delta$. The hard-core model on the infinite
$\Delta$-regular tree has a uniqueness threshold $\lambda_c(\Delta)$.
When $\lambda>\lambda_c(\Delta)$, there is no efficient approximate sampler
or FPRAS on $\mathcal{G}^{\leq}_\Delta$ unless
$\mathrm{NP}=\mathrm{RP}$~\cite{Sly10,SS14,gsvhc}; when
$\lambda<\lambda_c(\Delta)$, efficient approximate counting and sampling
are possible through correlation decay and the self-avoiding-walk tree
method~\cite{Weitz06}.

For bipartite graphs, however, the situation is less well understood. Let $\mathcal{G}^{\leq}_{\Delta, \Delta}$ denote the class of all bipartite graphs with maximum degree at most $\Delta$. Even at \(\lambda=1\), approximately counting independent sets in bipartite graphs gives rise to the canonical problem \(\BIS\), which is neither known to admit an FPRAS nor known to be NP-hard to approximate~\cite{DGGJ04,cai}. Moreover, many natural approximate counting problems have been shown to be \(\BIS\)-hard~\cite{DGGJ04}, making the complexity of \(\BIS\) a central open problem in approximate counting. More generally, Cai, Galanis, Goldberg, Guo, Jerrum, {\v{S}}tefankovi{\v{c}}, and Vigoda~\cite{cai} showed that, for every fixed \(\Delta\ge3\) and \(\lambda>\lambda_c(\Delta)\), approximating \(Z_G(\lambda)\) on graphs \(G\in\mathcal{G}^{\leq}_{\Delta,\Delta}\) is \(\BIS\)-hard. Thus, obtaining efficient approximation algorithms beyond uniqueness on general bipartite graphs remains a major open problem.

The study of approximate counting and sampling on bipartite graphs has led to several distinct algorithmic approaches. We shall briefly discuss the main approaches and their guarantees.

\paragraph{Correlation decay.}
The seminal correlation-decay approach of Weitz gives efficient approximate
counting on graphs of bounded maximum degree below the tree uniqueness
threshold~\cite{Weitz06}. For bipartite graphs, stronger guarantees are
possible: Liu and Lu established an FPTAS for $\BIS$ when the maximum degree
on one side is at most five, with no restriction on degrees on the other
side~\cite{LL15}. Chen, Liu, and Yin extended this direction to general
fugacities, characterizing the uniqueness condition under a degree bound on
one side and obtaining efficient sampling algorithms up to the corresponding
threshold~\cite{CLY23}.

\paragraph{The Polymer Model Framework.}
A second approach exploits the phase structure of the hard-core model at
high fugacity. On bipartite expanders, typical independent sets favor one
side of the bipartition, and deviations from these ordered phases can be
represented by polymers. Indeed, this idea was utilized by Jenssen, Keevash,
and Perkins to obtain efficient counting and sampling algorithms for
$\BIS$-hard problems on expander graphs~\cite{JKP20}.
Jenssen, Perkins, and Potukuchi subsequently combined graph containers with
cluster expansion methods to obtain counting algorithms under weaker
expansion conditions~\cite{JPP23}. Most recently, Narang and Perkins showed
that the cluster expansion framework also yields efficient counting and
sampling algorithms at high fugacity on bipartite spectral
expanders~\cite{NP}.

\paragraph{Un-bipartifying the model.}
On bipartite graphs, transitions between configurations favoring opposite
sides can create a bottleneck for Glauber dynamics~\cite{MWW09}. This
motivates decomposing the partition function to effectively
``un-bipartify'' the model. Kocurek, Oveis Gharan, and Tjowasi decomposed
the partition function into fixed slices to obtain efficient counting and
sampling at all fugacities on random regular bipartite graphs~\cite{KOGT26}. Narang and Perkins obtained efficient counting and sampling at all fugacities on bipartite spectral expanders using a discrete Hubbard--Stratonovich transform~\cite{NP} in combination with the Glauber mixing criterion of \cite{CCCYZ25}.
These methods primarily address the medium-fugacity regime; the all-fugacity guarantees combine them with high-fugacity polymer algorithms.
\newline

Despite this progress, the above results all require additional structural assumptions on the bipartite graph, meaning that no separation between the classes $\mathcal{G}_\Delta$ and $\mathcal{G}_{\Delta, \Delta}$ is known in full generality. 

\subsection{Our Contributions}

\subsubsection{Separation between $\#\mathrm{BIS}$ and $\#\mathrm{IS}$ under Gap-ETH}

Our results establish an exponential-time separation between regular bipartite graphs and general regular graphs for approximate counting under the Gap-Exponential Time Hypothesis (Gap-ETH), and for sampling under its randomized version. We use the following formulation of Gap-ETH ~\cite[Conjecture 2.5 and the subsequent discussion]{ChalermsookEtAl2017}.

\begin{definition}[Gap-ETH]
\label{def:gapeth}
There exist constants $D\geq1$, $\eta_0\in(0,1)$, and $\rho_0>0$ such that no deterministic algorithm running in time $O(\exp(\rho_0N))$ can distinguish between the following two cases for a $3$-CNF formula $\Phi$ on $N$ variables and $M\leq DN$ clauses:
\begin{itemize}
    \item $\Phi$ is satisfiable.
    \item Every assignment satisfies at most $(1-\eta_0)M$ clauses.
\end{itemize}

The randomized Gap-ETH asserts that no randomized algorithm with the same running time distinguishes
these cases with probability at least $2/3$ on every promised input.
\end{definition}

Gap-ETH was introduced in work of Dinur~\cite{Dinur2016} and Manurangsi and Raghavendra~\cite{ManurangsiRaghavendra2016} to study hardness of approximation in subexponential time. It has since yielded strong inapproximability results for problems such as Clique and Dominating Set~\cite{ChalermsookEtAl2017}.

We instead use it to analyze the hardness of approximate counting and sampling problems. Our first result is the following theorem.

\begin{theorem}
\label{thm:lower}
Assume Gap-ETH (Definition \ref{def:gapeth}). There exists an integer
$B\geq1$ and constants $\eta\in(0,1)$ and $\rho>0$ such that the following holds.
Fix any odd integer $\Delta>B+2$ and set
\[
C=3+3\Delta(\Delta+2),\qquad
c=\frac{\rho}{2C},\qquad
\lambda_{\mathrm H}=2^{4C/\eta}.
\]
For every fixed $\lambda\geq\lambda_{\mathrm H}$, no deterministic
algorithm produces an $11/10$-factor approximation to $Z_G(\lambda)$ in time $\exp(cn)$ on the class $\mathcal{G}_\Delta$.

Under randomized Gap-ETH, the constants can be chosen so that no sampler
with total variation error at most $1/10$ runs in time $\exp(cn)$ on the same graph class.
\end{theorem}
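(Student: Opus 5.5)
We reduce from the Gap-ETH promise problem of Definition~\ref{def:gapeth} to approximating $Z_G(\lambda)$ on $\Delta$-regular graphs, using the standard fact that at high fugacity the partition function is dominated by maximum independent sets. Concretely, $\alpha(G)$ can be read off to within an additive error from $\log_\lambda Z_G(\lambda)$: since $\lambda^{\alpha(G)}\le Z_G(\lambda)\le 2^n\lambda^{\alpha(G)}$, an $11/10$-factor approximation $\widehat Z$ gives $\log_\lambda\widehat Z = \alpha(G)\pm (n+1)/\log_2\lambda$. With $\lambda\ge 2^{4C/\eta}$ the error is at most $\eta n/(4C)$ plus a constant. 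The plan is therefore to build, from $\Phi$, a $\Delta$-regular graph $G$ on $n\le C N$ vertices whose independence number in the satisfiable case exceeds the unsatisfiable case by at least $\eta n/C$ (or $\eta N$), so that the approximation separates the two cases.

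\emph{Step 1 (degree reduction).} First apply a standard bounded-occurrence reduction to $\Phi$: replace each variable by copies linked through a constant-degree expander-based consistency gadget. This yields an instance where every variable occurs at most $B$ times, with $N'=O(N)$ variables, and where the gap $\eta_0$ becomes some $\eta>0$. We absorb the blow-up into $\rho$, so Gap-ETH still forbids $\exp(\rho N')$-time distinguishers. This fixes $B$, $\eta$ and $\rho$.

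\emph{Step 2 (FGLSS-type graph).} Use the clause–literal construction: a triangle for each clause, plus edges between complementary literal occurrences. Then $\alpha = \MaxSAT(\Phi')$, so the satisfiable case gives $\alpha=M'$ and the gapped case gives $\alpha\le(1-\eta)M'$. Each vertex has degree at most $2+B<\Delta$.

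\emph{Step 3 (regularization).} Pad each vertex up to degree exactly $\Delta$ by attaching gadgets. For each deficient vertex we attach a constant-size $\Delta$-regular-completing gadget, for instance pieces built from $K_{\Delta+1}$ minus a matching, or copies of $K_{\Delta,\Delta}$ with edges rerouted. The gadget must contribute a fixed amount $a$ to the independence number, regardless of whether its attachment vertex is chosen. The oddness of $\Delta$ is used here to make the handshake parity work when deleting a matching. The count $C=3+3\Delta(\Delta+2)$ is the per-clause vertex budget: 3 triangle vertices, each receiving at most $\Delta+2$ gadget pieces of size about $\Delta$. This gives $n\le CM'$, and the total gadget contribution is an additive constant $A$ known to the algorithm. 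The gap thus survives: $\alpha_{\mathrm{sat}}-\alpha_{\mathrm{unsat}}\ge\eta M'\ge \eta n/C$.

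\emph{Step 4 (conclude).} An $\exp(cn)$-time algorithm with $c=\rho/(2C)$ runs in time at most $\exp(\rho M'/2)$. Thresholding $\log_\lambda\widehat Z$ at $A+(1-\eta/2)M'$ would then decide the gap instance, contradicting Gap-ETH. For sampling under randomized Gap-ETH, a sample with TV error at most $1/10$ lands with probability at least $9/10-o(1)$ in sets of size at least $\alpha-\eta n/(4C)$. This holds because the $\mu$-mass of sets smaller than that is at most $2^n\lambda^{-\eta n/(4C)}$, which is tiny. Checking the size of the sampled independent set then distinguishes the two cases with probability at least $2/3$.

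The main obstacle is Step 3: designing regularizing gadgets whose contribution to $\alpha$ is exactly independent of the attachment choice, while staying within the $C$ vertex budget and respecting parity. I would first try attaching to each deficient vertex a copy of $K_{\Delta,\Delta}$ minus one edge per missing degree. The surplus in $\alpha$ then has to be balanced by pairing such gadgets.
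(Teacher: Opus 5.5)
\textbf{Step 3 is a genuine gap.} Your overall route matches the paper's: bounded-occurrence gap 3-SAT, the clause--literal graph $H$ with $\alpha(H)=\MaxSAT(\Phi)$, padding to $\Delta$-regularity, and reading $\alpha(G)$ off $Z_G(\lambda)$ via $\lambda^{\alpha(G)}\le Z_G(\lambda)\le 2^n\lambda^{\alpha(G)}$. Your counting threshold is fine, and so is your expander-based Step 1, which is the standard argument behind the bounded-occurrence formulation the paper cites. But you leave Step 3 open, and the candidates you name do not work.

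A gadget with a single port of degree $\Delta-1$ and all other degrees $\Delta$ has degree sum $\Delta|Q|-1$, so it must have odd order. A $K_{\Delta,\Delta}$-based gadget has even order, so it always has an even number of deficient slots. Take the ports $x,y$ of $K_{\Delta,\Delta}-xy$. Its only maximum independent sets are the two sides, each containing a port. So the largest independent set avoiding both ports has size $\Delta-1$. The gadget's contribution then depends on whether its attachment vertices are chosen, and $\alpha(G)$ is no longer $\alpha(H)$ plus a constant. Chaining or ``pairing'' such gadgets only propagates this loss. Likewise, in $K_{\Delta+1}$ minus an edge, the only independent pair is the two ports.

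The missing idea is an odd-order, single-port gadget whose maximum independent sets are tiny and include one avoiding the port. Take $K_{\Delta+2}$ on $\{p,a,b,u_1,\dots,u_{\Delta-1}\}$ and delete the path $apb$ together with a perfect matching on the $u_i$; this is where oddness of $\Delta$ enters. Then:
\begin{itemize}
\item $p$ has degree $\Delta-1$ and every other vertex has degree $\Delta$;
\item the deleted edges form a triangle-free graph, so $\alpha(Q)\le 2$;
\item $\{u_1,u_2\}$ is independent and avoids $p$, so $\alpha(Q)=\alpha(Q-p)=2$.
\end{itemize}
Attaching $\Delta-\deg_H(v)$ copies to each $v$ through $p$ gives $\alpha(G)=\alpha(H)+2s$ with $s\le 3\Delta M$ and $n=3M+(\Delta+2)s\le CM$. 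So $C$ accounts for at most $\Delta$ gadgets of $\Delta+2$ vertices per literal vertex, not $\Delta+2$ pieces of size about $\Delta$.

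\textbf{Your sampling step has a quantitative error.} With the window $\eta n/(4C)$ and $\lambda=\lambda_{\mathrm H}$, your bound is $2^n\lambda^{-\eta n/(4C)}\le 2^n\cdot 2^{-n}=1$, which is not small. Use a window $\gamma M'$ with $\gamma\in[\eta/2,\eta)$ instead; the paper takes $\gamma\in(\eta/2,\eta)$. Accept when the output is independent and has size greater than $A+(1-\gamma)M'$.
\begin{itemize}
\item In the satisfiable case, the rejected $\mu$-mass is at most $2^{CM'}\lambda^{-\gamma M'}\le 2^{CM'}2^{-4C\gamma M'/\eta}\le 2^{-CM'}$. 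So acceptance probability is at least $1-2^{-CM'}-1/10>2/3$.
\item In the unsatisfiable case, every independent set has size at most $A+(1-\eta)M'$, so the test always rejects.
\end{itemize}
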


At sufficiently high fugacity, we complement this hardness result with algorithms on regular bipartite graphs whose exponential running-time rate can be made arbitrarily small.

\begin{theorem} \label{thm:upper}
Fix $\Delta\geq1$ and $\varepsilon\in(0,1)$. For every $\delta \in (0,1)$ and every fugacity $\lambda$ satisfying
\[
\lambda\geq\lambda_{\mathrm B}(\Delta,\varepsilon)
 :=\left[\frac{8e}{\varepsilon}\ln\frac4\varepsilon\right]^\Delta,
\]
there exists a deterministic algorithm that produces a $(1+\delta)$-factor approximation to $Z_G(\lambda)$ in time $(1+\varepsilon)^n\poly(n,1/\delta)$ on the class $\mathcal{G}_{\Delta, \Delta}.$ There also exists a $\delta$-approximate sampling algorithm for $\mu_{G, \lambda}$ with the same running time.
\end{theorem}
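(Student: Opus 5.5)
The plan is to reduce the problem to an exhaustive but cheap enumeration of the independent sets that carry almost all of the Gibbs mass. Let $G\in\mathcal G_{\Delta,\Delta}$ have sides $L,R$. Regularity gives $|L|=|R|=n/2$, and by K\H{o}nig's theorem (or Hall's condition, which holds since $|N(S)|\ge|S|$ for every $S\subseteq L$) the maximum independent set has size $n/2$, so $Z_G(\lambda)\ge\lambda^{n/2}$. For an independent set $I$ define its \emph{deficit} $k(I)=n/2-|I|\ge0$. The sets of deficit at least $k_0$ contribute at most $2^n\lambda^{n/2-k_0}$ in total. Hence truncating to deficit $<k_0$ costs a factor at most $1+2^n\lambda^{-k_0}$. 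Taking $k_0=\theta n$ with $\theta$ a small multiple of $\varepsilon/\ln\lambda$, plus an additive $O(\log(1/\delta))$ term, makes this error at most $1+\delta/2$. The whole difficulty is then to enumerate all independent sets of deficit $<k_0$ in time $(1+\varepsilon)^n\poly(n)$. Given such an enumeration, the deterministic estimate $\widehat Z=\sum\lambda^{|I|}$ over the enumerated sets is a $(1+\delta)$-approximation. Sampling is equally direct: we sample exactly from the truncated distribution by drawing an enumerated set with probability proportional to its weight. Its total variation distance from $\mu_{G,\lambda}$ is at most the truncated mass, which is at most $\delta$.

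For the enumeration I would use the spanning-tree encoding. Write $I=A\cup B$ with $A=I\cap L$ and $B=I\cap R\subseteq R\setminus N(A)$. Since $|N(A)|\ge|A|$, the deficit equals $\bigl(|N(A)|-|A|\bigr)+\bigl(|R\setminus N(A)|-|B|\bigr)$, a sum of two nonnegative terms.
\begin{itemize}
\item The second term counts the ``holes'' of $R\setminus N(A)$ left out of $B$. Given $A$, at most $\binom{n}{<k_0}$ choices of holes are possible.
\item The first term is the Hall surplus of $A$. By regularity, surplus $0$ holds on a connected component $S$ of $G[A\cup N(A)]$ exactly when $A_S\cup N(A_S)$ is a full connected component of $G$.
\end{itemize}
Thus $A$ is determined by a choice of ``fully occupied'' components of $G$, which are at most $n$ in number, together with the nonclosed pieces. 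I would treat the two phases symmetrically, encoding whichever side's occupation is closer to full within each component of $G$. The deviation from the ordered phase is then described by a collection of connected regions $X\cup Y$, with $Y\subseteq R$, $N(Y)\subseteq X\subseteq L$, and total surplus $|X|-|Y|$ at most $k_0$.

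Each region would be encoded by a rooted spanning tree of its \emph{boundary}, not of the whole region. The boundary consists of the vertices of $X$ adjacent to $R\setminus Y$; each such vertex contributes a surplus edge, so a region of surplus $s$ has boundary of size $O(\Delta s)$. A rooted tree on $m$ vertices in a graph of degree at most $\Delta^2$ (boundary vertices at distance two) is specified by $O(\log n)+m\log(e\Delta^2)$ bits, and the boundary then determines the region up to the choice of side. Summing over total surplus $\le k_0=\theta n$ gives at most $\poly(n)\,(C\Delta)^{O(\Delta\theta n)}\binom{n}{\le\theta n}$ encodings. With $\lambda\ge\lambda_{\mathrm B}$, one has $\ln\lambda\ge\Delta\ln\bigl(\tfrac{8e}{\varepsilon}\ln\tfrac4\varepsilon\bigr)$, and this permits $\theta\approx\varepsilon/(\Delta\cdot\mathrm{polylog}(1/\varepsilon))$, which makes the count at most $(1+\varepsilon)^n$. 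The exponent $\Delta$ in $\lambda_{\mathrm B}$ is consistent with each surplus unit costing roughly $\Delta$ boundary vertices, that is, a $\lambda^{1/\Delta}$ weight per boundary vertex. Decoding each encoding and checking independence takes polynomial time.

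The main obstacle is the claim that a region is recoverable from a tree on its boundary of size $O(\Delta\cdot\text{surplus})$. A single region can be huge, up to half the graph flipped to the other phase, while having surplus $1$, so a naive spanning tree of the region is far too expensive. I would first try to show that the boundary of $X$, as a subset of the distance-two graph, is connected, or has at most $O(\text{surplus})$ components. Then $X$ can be recovered as the side of the cut it determines, and it is this choice of side that the per-component phase selection resolves. If connectivity fails, I would fall back on charging each boundary component to at least one unit of surplus, paying an extra $\binom{n}{\le k_0}$ for the roots.
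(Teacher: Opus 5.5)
\textbf{There is a genuine gap, and it is in the truncation step, before any enumeration.} With the crude count of at most $2^n$ independent sets, making $2^n\lambda^{-k_0}\le\delta/2$ forces $k_0\ge n\ln 2/\ln\lambda$. It is not enough to take a small multiple of $\varepsilon n/\ln\lambda$: with $\theta=c\varepsilon/\ln\lambda$ the error is about $2^n e^{-c\varepsilon n}$, which is exponentially large unless $c\ge \ln 2/\varepsilon$.

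Once $k_0\ge n\ln 2/\ln\lambda$, the family you plan to enumerate is too large for \emph{any} encoding. It contains every set $L\setminus S$ with $|S|<k_0$, so it has at least
\[
\binom{n/2}{k_0-1}=e^{(\frac12 h(2\theta)+o(1))n}
\]
members, where $\theta=\ln2/\ln\lambda$ and $h$ is the natural-log binary entropy. At $\lambda=\lambda_{\mathrm B}$ with $\Delta=3$ and $\varepsilon=0.01$, one has $\ln\lambda\approx 28.4$, $\theta\approx 0.024$ and $\frac12 h(2\theta)\approx 0.097$, whereas $\ln(1+\varepsilon)\approx 0.00995$. For fixed $\Delta$ and $\varepsilon\to0$, the rate is of order $\ln\ln(1/\varepsilon)/\ln(1/\varepsilon)\gg\varepsilon$. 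So the plan ``truncate by deficit via $2^n$, then enumerate everything below'' cannot run in time $(1+\varepsilon)^n$ at the stated threshold. The tail must be bounded by a \emph{weighted} count in which each unit of cost carries a factor $\lambda^{-1/\Delta}$.

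\textbf{The enumeration is also not yet a proof.} You leave open whether a region can be recovered from a tree on its boundary. Even granting that, the factor $(C\Delta)^{O(\Delta\theta n)}$ puts a $\ln\Delta$ into the exponential rate, and $\lambda_{\mathrm B}$ does not pay for it when $\Delta$ is large, since its base does not grow with $\Delta$.

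\textbf{The missing idea makes all of this unnecessary.} Fix \emph{one} rooted spanning tree $T$ of $G$ in advance, independent of $I$. Use the labels $z_v=\mathbf 1_{v\in I}$ on $L$ and $z_v=1-\mathbf 1_{v\in I}$ on $R$. Across an edge the labels differ exactly when the edge is a defect (both endpoints unoccupied). Hence $I$ is determined by the root label together with the set of defective edges of $T$. A huge flipped region with a small boundary then costs only the few tree edges crossing that boundary, which are defects, so your worry about spanning the region disappears.

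Regularity gives exactly $\Delta(n/2-|I|)$ defects in $G$. Therefore a set with $j$ tree defects has weight at most $\lambda^{n/2}t^j$, where $t=\lambda^{-1/\Delta}$, and at most $2\binom{n-1}{j}$ sets have $j$ tree defects. Truncate at $K=\lfloor a(n-1)\rfloor$ tree defects and write $N=n-1$. The relative error is then at most
\[
\sum_{j>K}\binom Nj t^j\le\sum_{j>K}(eNt/j)^j\le 2^{-K}
\]
whenever $t\le a/(2e)$, and the enumeration cost is $2e^{Nh(a)}$. With $a=\varepsilon/(4\ln(4/\varepsilon))$ one gets $h(a)<\ln(1+\varepsilon)$, and the condition $t\le a/(2e)$ is exactly $\lambda\ge\lambda_{\mathrm B}$. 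No Hall surplus, holes, boundary trees or phase selection are needed.

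Your heuristic that each unit of deficit costs $\Delta$ boundary objects is right. The correct objects are the $\Delta$ defective edges, counted along a fixed global tree.
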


We note that though Theorem \ref{thm:upper} does not apply at $\lambda = 1$, it still applies to $\BIS$-hard problems (see Lemma \ref{lem:Delta-reg-bis-hard}). More interesting, however is the comparison between Theorem \ref{thm:upper} and the complexity of approximating the partition function of the hard-core model on general $\Delta$-regular graphs.

In particular, Theorems~\ref{thm:lower} and~\ref{thm:upper} separate approximation algorithms on regular bipartite graphs and general regular graphs at the same degree and fugacity. Indeed, fix $\Delta$, $c$, and $\lambda_{\mathrm H}$ as in
Theorem~\ref{thm:lower}, choose $\varepsilon\in(0,1)$ with
$\ln(1+\varepsilon)<c/2$, and fix $\lambda\geq
\max\{\lambda_{\mathrm H}, \lambda_{\mathrm B}(\Delta,\varepsilon)\}.$
An $11/10$-factor approximation on $\Delta$-regular bipartite graphs admits a deterministic $\exp(cn/2)\poly(n)$-time algorithm, whereas Gap-ETH rules out an $\exp(cn)\poly(n)$-time deterministic algorithm on all $\Delta$-regular graphs.

Having obtained an exponential-scale separation under Gap-ETH, we next ask whether a corresponding separation can be shown for polynomial-time algorithms. We show this in the next subsection.

\subsubsection{Separation Between $\#\mathrm{BIS}$ and $\#\mathrm{IS}$ for polynomial-time algorithms}

To analyze polynomial-time algorithms, we allow the approximation factor in
\eqref{eq:factor-approximation} to grow with the size of the graph, taking
$F=(1+\varepsilon)^n$ for $\varepsilon\in(0,1)$.
Our next result shows that at sufficiently large fugacity, one can attain a sufficiently good exponential approximation on the class $\mathcal{G}_{\Delta, \Delta}$ in polynomial time.

\begin{theorem}\label{thm:poly-alg}
Fix $\Delta\ge 3$ and $\varepsilon\in(0,1)$. For every fixed $\lambda\ge\varepsilon^{-\Delta}$,
there is a deterministic polynomial-time $(1+\varepsilon)^n$-factor
approximation to $Z_G(\lambda)$ on the class $\mathcal{G}_{\Delta, \Delta}$.
\end{theorem}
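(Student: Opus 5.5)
Since $G\in\mathcal G_{\Delta,\Delta}$ is $\Delta$-regular bipartite with sides $L,R$, regularity forces $|L|=|R|=n/2$. We output $\widehat Z=\lambda^{n/2}$ (or $2\lambda^{n/2}$) and show that it is within a $(1+\varepsilon)^n$ factor of $Z_G(\lambda)$. The lower bound is immediate: $L$ is itself an independent set, so $Z_G(\lambda)\ge\lambda^{n/2}$. The work is the upper bound $Z_G(\lambda)\le(1+\varepsilon)^{n}\lambda^{n/2}$, up to a factor we can absorb.

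For the upper bound, first use that every independent set $I$ has $|I|\le n/2$. Indeed, $\Delta|I\cap L|\le e(I\cap L,R\setminus I)\le \Delta|R\setminus I|$, so $|I\cap L|+|I\cap R|\le|R|$. Grouping independent sets by deficiency $k=n/2-|I|$ gives
\[Z_G(\lambda)=\sum_{k\ge0}N_k\lambda^{n/2-k},\]
where $N_k$ is the number of independent sets of size $n/2-k$. We then need to bound $N_k$.

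The key step is a counting bound: $N_k\le\binom{n}{\lceil 2k/\Delta\rceil\text{-ish}}$-type estimates are too weak. Instead we would show $N_k\le 2\cdot\binom{n}{k}\cdot\Delta^{k}$-like bounds via the following encoding. Write $A=I\cap L$ and $B=I\cap R$. The double count above gives $|N(A)|\ge|A|$, and more sharply $e(A, R\setminus B)=\Delta|A|$, where $|R\setminus B|=|A|+k$. The set $I$ is determined by specifying $L\setminus A$ and $R\setminus B$, which are "near-complementary". Here we would use the spanning-tree idea hinted at in the abstract. Take the bipartite subgraph $H$ induced on $(L\setminus A)\cup(R\setminus B)$; its structure is constrained because $A\cup B$ is independent and every edge from $A$ lands in $R\setminus B$. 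We aim for a cruder but sufficient bound directly: $N_k\le\binom{n}{k}^{O(1)}$ is not enough either. So we instead bound the total weight as
\[Z_G(\lambda)\le\lambda^{n/2}\prod(\text{per-vertex factor }(1+\lambda^{-1/\Delta}))^{n}\]
by a Shearer/entropy-type argument. The plan is to apply the Kahn/Galvin–Tetali bound for $\Delta$-regular bipartite graphs, $Z_G(\lambda)\le Z_{K_{\Delta,\Delta}}(\lambda)^{n/(2\Delta)}$. This is a known result; if it is not stated earlier in the paper, we cite it. Then
\[Z_{K_{\Delta,\Delta}}(\lambda)=2(1+\lambda)^\Delta-1\le 2(1+\lambda)^\Delta,\]
so
\[Z_G(\lambda)\le 2^{n/(2\Delta)}(1+\lambda)^{n/2}=\lambda^{n/2}\,\bigl[2^{1/\Delta}(1+1/\lambda)\bigr]^{n/2}.\]

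Finally, check the constants. With $\lambda\ge\varepsilon^{-\Delta}$ the factor $(1+1/\lambda)^{n/2}$ is harmless, but $2^{n/(2\Delta)}$ is a fixed exponential and need not be at most $(1+\varepsilon)^n$ for small $\varepsilon$. This is the main obstacle: the Galvin–Tetali bound is lossy by the factor $2^{1/\Delta}$ coming from the two phases. To fix it, we would refine to a per-vertex entropy bound that charges $\lambda^{-1/\Delta}$-type terms, or use the paper's spanning-tree encoding to show
\[N_k\le\binom{n}{\le 2k}\cdot 2^{O(k)}\quad\text{for }k\ge1,\qquad N_0\le 2^{\#\text{components}}.\]
With such a bound,
\[\sum_k N_k\lambda^{-k}\le\bigl(1+O(\lambda^{-1/\Delta})\bigr)^n\le(1+\varepsilon)^n\]
once $\lambda^{-1/\Delta}\le\varepsilon$ up to constants. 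We expect this to match the threshold $\lambda\ge\varepsilon^{-\Delta}$, where the exponent $\Delta$ reflects that a deficiency $k$ can be certified by about $k/\Delta$-scale local choices. Proving this counting lemma is the crux. The algorithm itself remains trivially polynomial: it just outputs $\lambda^{n/2}$ times a fixed correction.
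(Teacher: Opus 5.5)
\textbf{There is a genuine gap.} The upper bound on $Z_G(\lambda)$, which you yourself call the crux, is never proved, and the tools you propose cannot supply it.

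First, Galvin--Tetali is not merely lossy. It holds with equality for a disjoint union of $n/(2\Delta)$ copies of $K_{\Delta,\Delta}$, where $Z_G(\lambda)\ge(2\lambda^{\Delta})^{n/(2\Delta)}=2^{n/(2\Delta)}\lambda^{n/2}$. So no sharper entropy inequality valid for all $\Delta$-regular bipartite graphs can remove the factor $2^{n/(2\Delta)}$.

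The same example breaks two other parts of your plan:
\begin{itemize}
\item Your estimator, $\lambda^{n/2}$ times a fixed correction, is not a $(1+\varepsilon)^n$-factor approximation once $1+\varepsilon<2^{1/(2\Delta)}$.
\item Your closing inequality $\sum_k N_k\lambda^{-k}\le(1+O(\lambda^{-1/\Delta}))^n$ fails for large $\lambda$, since its left side is at least $N_0=2^{n/(2\Delta)}$.
\end{itemize}
Connectivity is therefore essential. Since $Z_G$ factorizes over components, treat each component $G_i$ separately and output $\prod_i 2\lambda^{n_i/2}$. The two sides of a connected component give $Z_{G_i}(\lambda)\ge 2\lambda^{n_i/2}$, and the per-component factors $(1+\varepsilon)^{n_i}$ multiply to $(1+\varepsilon)^n$.

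For connected $G$, the missing idea is to group independent sets not by the deficiency $k$ but by the number $j$ of defective edges (both endpoints unoccupied) of a fixed rooted spanning tree $T$.
\begin{itemize}
\item Along a tree edge, independence forces exactly one endpoint to be occupied unless the edge is a defect, in which case both are unoccupied.
\item Hence the root occupancy and the set of tree defects determine $I$, and at most $2\binom{n-1}{j}$ independent sets have $j$ tree defects.
\item By regularity, the total number of defects is $|E|-\Delta|I|=\Delta(n/2-|I|)\ge j$. Using $\lambda\ge1$, each such set has weight at most $\lambda^{n/2}\lambda^{-j/\Delta}$.
\end{itemize}
Summing over $j$ gives
\[
Z_G(\lambda)\le 2\lambda^{n/2}\bigl(1+\lambda^{-1/\Delta}\bigr)^{n-1}\le 2\lambda^{n/2}(1+\varepsilon)^{n}
\]
for $\lambda\ge\varepsilon^{-\Delta}$. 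This matches the stated threshold exactly, with no hidden constant; your ``up to constants'' version would not.

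Note also that this encoding only yields $N_k\le 2\sum_{j\le\Delta k}\binom{n-1}{j}$, not the $\binom{n}{\le 2k}2^{O(k)}$ bound you hoped to extract from it. Passing through $N_k$ at all is an unnecessary detour.
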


We now complement the above result with the corresponding result on the class of graphs $\mathcal{G}_\Delta$.

\begin{theorem}\label{thm:poly-hard}
Fix $\Delta \ge 3$, $\varepsilon \in (0,1)$, and set $C_\Delta=200(\Delta^2-7)$.
For every fixed $\lambda>[2(1+\varepsilon)^2]^{C_\Delta}$, producing a
$(1+\varepsilon)^n$-factor approximation on the class $\mathcal{G}_\Delta$ is $\mathrm{NP}$-hard.
\end{theorem}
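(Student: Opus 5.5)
We reduce from a gap version of maximum independent set on bounded-degree graphs and use the high fugacity to make $\log_\lambda Z_G(\lambda)$ essentially equal to the independence number $\alpha(G)$. The starting point is the PCP theorem in the form of APX-hardness of Max-Independent-Set on graphs of maximum degree $3$ (or of $\MaxSAT$ for $3$-CNF with bounded occurrences). It gives constants $\gamma>0$ and a threshold $k$ such that it is NP-hard to distinguish $\alpha(H)\ge k$ from $\alpha(H)\le(1-\gamma)k$ on $n_H$-vertex graphs with $k=\Theta(n_H)$. We then regularize $H$ into a $\Delta$-regular graph $G$ with $n=|V(G)|\le C' n_H$ vertices, chosen so that $\alpha(G)=\alpha(H)+s$ for an explicitly computable shift $s$. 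The padding gadgets should have known independence number and be attached so they interact with $H$ predictably. One option is to attach to each deficient vertex a fixed gadget through which the maximum independent set is forced, up to an additive per-gadget constant. The constant $C_\Delta=200(\Delta^2-7)$ presumably records the resulting bound $n\le O(\Delta^2)\,n_H$ divided by the gap; I would aim for a ratio $n/(\gamma k)\le C_\Delta$.

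The analytic step is elementary. For any graph,
\[
\lambda^{\alpha(G)}\le Z_G(\lambda)\le 2^n\lambda^{\alpha(G)}.
\]
Given a $(1+\varepsilon)^n$-approximation $\widehat Z$, we get
\[
\log_\lambda \widehat Z\in\Bigl[\alpha(G)-\tfrac{n\ln(1+\varepsilon)}{\ln\lambda},\;\alpha(G)+\tfrac{n\ln(2(1+\varepsilon))}{\ln\lambda}\Bigr].
\]
The total additive uncertainty is therefore at most $n\ln(2(1+\varepsilon)^2)/\ln\lambda$. When $\lambda>[2(1+\varepsilon)^2]^{C_\Delta}$ this is less than $n/C_\Delta$, which by the choice of $C_\Delta$ is less than the gap $\gamma k$ in $\alpha(G)$ between the two cases. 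Thresholding $\log_\lambda\widehat Z$ at $s+(1-\gamma/2)k$ then decides the NP-hard gap problem, so a polynomial-time $(1+\varepsilon)^n$-approximation on $\mathcal G_\Delta$ would give $\mathrm P=\mathrm{NP}$. Note that $\Delta$-regularity with $n$ even (or $\Delta$ even) must be respected in the construction. Taking disjoint copies of $G$ preserves both regularity and the gap ratio, so parity is not an issue.

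The main obstacle is the regularization with explicit constants. We must turn the degree-$\le 3$ hard instance into a $\Delta$-regular graph, for every $\Delta\ge3$, with linear blowup and an exactly controlled shift in $\alpha$, while keeping the ratio (vertices)/(gap) bounded by $200(\Delta^2-7)$. I would first try a standard approach. Take the Berman–Karpinski hardness for degree-$3$ MIS (gap roughly $1/140$ relative to $n_H$). Then pad each vertex of degree $d<\Delta$ by joining it to $\Delta-d$ copies of a small $\Delta$-regular-minus-one-edge gadget, e.g.\ $K_{\Delta+1}$ minus a matching edge pair. Each gadget contributes exactly one vertex to any maximum independent set regardless of the host vertex's status. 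The care needed is to verify this ``regardless'' property: one must show that an optimal independent set can always be rearranged within each gadget without changing its size. With $O(\Delta^2)$ new vertices per original vertex, this yields the stated $C_\Delta$.
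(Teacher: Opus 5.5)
Your analytic half matches the paper: the sandwich $\lambda^{\alpha}\le Z\le 2^N\lambda^{\alpha}$ and the condition $\lambda>[2(1+\varepsilon)^2]^{C_\Delta}$ are used exactly as you describe. The genuine gap is the regularization step, specifically the gadget you propose.

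To get an exact shift in $\alpha$, you need a gadget $Q$ with a maximum independent set that \emph{avoids all ports}. Then each copy contributes exactly $\alpha(Q)$ whatever the host does. In $K_{\Delta+1}$ minus an edge $uv$, and likewise minus a two-edge matching, every independent pair consists of ports. So a copy contributes $2$ when its hosts are empty and only $1$ when a host is occupied, and no rearrangement inside the gadget fixes this. This can erase the gap entirely. Take cubic $H$ and $\Delta=5$, and hang one copy of $K_6-uv$ on each vertex $x$ with both $u,v$ joined to $x$. Then $x$ together with its gadget contributes exactly $2$ whether or not $x\in I$, so $\alpha(G)=2n_H$ for every $H$.

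Single-port gadgets (one vertex of degree $\Delta-1$, all others of degree $\Delta$) also cannot exist for even $\Delta$, by the handshake lemma. That is why the paper's Gap-ETH theorem restricts to odd $\Delta$, and why this proof uses two ports.

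Your ``$\Delta$-regular minus one edge'' shape is right, but the base graph matters. The paper takes $Q=\overline{C_{\Delta+3}}$ minus the edge $\{0,3\}$, i.e.\ the complement of the cycle with chord $03$. The chorded cycle is triangle-free, so $\alpha(Q)\le 2$, and $\{1,2\}$ is an independent pair avoiding the ports $0,3$. The paper pairs the vertices of a cubic graph and attaches $\Delta-3$ copies per pair $(x,y)$, port $0$ to $x$ and port $3$ to $y$. This gives $\alpha(H)=\alpha(G)+2t$ exactly.

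The stated constant also depends on details you leave open. $C_\Delta=200(\Delta^2-7)$ comes from the exact count $N=n+(\Delta+3)\cdot n(\Delta-3)/2=(\Delta^2-7)n/2$. It also needs a source gap $(a-b)n>n/400$ on \emph{cubic} graphs; the paper gets this from Chleb\'ik--Chleb\'ikov\'a with $a-b\ge 5/1936$. Berman--Karpinski's $140/139$ is a multiplicative ratio, so the additive gap is about $k/140$ with $k$ a fraction of $n_H$, not $n_H/140$. A non-regular degree-$\le 3$ source also needs more than $\Delta-3$ gadgets at low-degree vertices, which inflates $N$.

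Finally, thresholding at the midpoint $s+(1-\gamma/2)k$ is not justified, because your error window is asymmetric. It extends $n\ln(2(1+\varepsilon))/\ln\lambda$ above but only $n\ln(1+\varepsilon)/\ln\lambda$ below, so the upper deviation can consume more than half the gap. You must threshold somewhere strictly between the two intervals. The paper does this with a fixed $q$ satisfying $2(1+\varepsilon)\lambda^{b'}<q<\lambda^{a'}/(1+\varepsilon)$ and compares $\widehat Z$ to $q^N$.
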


\subsection{Discussion}\label{sec:intro-discussion}

A driving question of this paper was separation between optimization and approximate sampling. We observe that in the case of the hard-core model, one can construct simple examples which establish separation between these problems.

Indeed, let $\mathcal{C}$ be some class of graphs on which approximating $Z_G(\lambda)$ is hard. For each $G\in\mathcal{C}$, form $G'$ by attaching a new leaf to every vertex. Let $\mathcal{C}'$ be the resulting graph class.

It is easy to see that for all $G' \in \mathcal{C'}$, $\alpha(G') = |V(G)|$. However, for the partition function, we have that
\[Z_{G'}(\lambda) = \sum_{I \in \mathcal{I}(G)} \lambda^{|I|} \cdot (1 + \lambda)^{|V(G)| - |I|} =(1+\lambda)^{|V(G)|}
\sum_{I\in\mathcal I(G)}
\left(\frac{\lambda}{1+\lambda}\right)^{|I|}
=(1+\lambda)^{|V(G)|}\,
Z_G\!\left(\frac{\lambda}{1+\lambda}\right)\]

\noindent and thus hardness of approximation persists. However, the  algorithm and analysis underlying Theorems~\ref{thm:upper} and \ref{thm:poly-alg} are simple and follow from enumerating independent sets with few defects along a spanning tree. Here, a defect refers to an edge with both endpoints unoccupied. As such, the simplicity of the approach enables its extension to regular graphs where $\frac{\alpha(G)}{n} \approx \frac{1}{2}$.

For $\eta\ge0$, let
\[
    \mathcal G_\Delta(\eta)
    :=\left\{
        G\in\mathcal G_\Delta:
        \alpha(G)\ge\left(\frac12-\eta\right)|V(G)|
      \right\}.
\]

\begin{corollary}\label{cor:near-half-extension}
Fix $\Delta\ge1$ and $\varepsilon\in(0,1)$.
There exist $\eta>0$ and $\lambda_0>1$ such that, for every
$\delta\in(0,1)$ and every fixed fugacity
$\lambda\ge\lambda_0$, there exists a deterministic algorithm
that produces a $(1+\delta)$-factor approximation to $Z_G(\lambda)$
in time $(1+\varepsilon)^n\poly(n,1/\delta)$ on the class
$\mathcal G_\Delta(\eta)$.
There also exists a $\delta$-approximate sampling algorithm
for $\mu_{G,\lambda}$ with the same running time.
\end{corollary}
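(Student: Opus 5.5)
We prove Corollary~\ref{cor:near-half-extension} with the spanning-tree encoding behind Theorem~\ref{thm:upper}. The algorithm enumerates, along a spanning structure of $G$, all independent sets with few defects (edges with both endpoints unoccupied). We then show that at high fugacity these sets carry all but a $\delta$-fraction of the mass of $Z_G(\lambda)$. The bipartite structure was only used to guarantee $\alpha(G)=n/2$; here we replace it by the hypothesis $\alpha(G)\ge(\frac12-\eta)n$.

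\textbf{Step 1 (the encoding).} Fix a spanning forest $T$ of $G$ and a root in each component. Given $I$, orient the tree and record, for each vertex, only whether the configuration deviates from a ``locally alternating'' pattern along its parent edge. An independent set restricted to a path alternates except at defect edges, and at edges with both ends unoccupied. So $I$ is determined by the set $D(I)$ of non-alternating tree edges together with $O(1)$ bits per component. For a tree edge $uv$ with $u,v\notin I$, the edge is a defect in $G$; an edge with both ends occupied is impossible. Hence the number of independent sets with $|D(I)|=k$ is at most $2^{\#\text{comp}}\binom{n}{k}$. Taking $T$ a spanning tree of each connected component, and handling isolated pieces trivially, the enumeration costs $\binom{n}{\le k}\poly(n)$.

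\textbf{Step 2 (size versus defects).} Every vertex not in $I$ is adjacent in $T$ to its parent or child, and along tree paths occupied vertices are separated by at least one unoccupied vertex. Counting, we get $|I|\le n/2+O(1)\cdot(\#\text{comp})$, and a set with $k$ tree-defects satisfies $|I|\le \frac n2-\frac k2+O(\#\text{comp})$. A degree-$\Delta$ graph has components of size at least $\Delta+1$, which I will use to keep the component term at most $n/(\Delta+1)$. If that term is too large, we instead work per component and multiply; this is the part I would check carefully. Meanwhile $Z_G(\lambda)\ge\lambda^{\alpha(G)}\ge\lambda^{(1/2-\eta)n}$.

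\textbf{Step 3 (tail bound and parameters).} The contribution of sets with at least $k_0=\beta n$ defects is at most
\[
\sum_{k\ge\beta n}2^{\#\text{comp}}\binom nk\lambda^{n/2-k/2}\le 2^{n/(\Delta+1)}\Big(\tfrac{e}{\beta}\Big)^{\beta n}\lambda^{n/2-\beta n/2}.
\]
Dividing by $\lambda^{(1/2-\eta)n}$ gives $\exp\!\big(n[\,O(1)+\beta\ln(e/\beta)+(\eta-\beta/2)\ln\lambda\,]\big)$. We choose $\beta$ with $\binom{n}{\beta n}\le(1+\varepsilon)^n$, i.e.\ $\beta\ln(e/\beta)<\ln(1+\varepsilon)$. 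We then set $\eta=\beta/4$ and take $\lambda_0$ large enough that $\frac{\beta}{4}\ln\lambda_0$ dominates the constants. The tail is then at most $e^{-\Omega(n)}Z_G$, which is below $\delta Z_G$ once $n\gtrsim\log(1/\delta)$. Smaller $n$ are solved by brute force in $\poly(1/\delta)$ time. Summing $\lambda^{|I|}$ exactly over the enumerated low-defect sets, after checking independence in $G$, gives a $(1+\delta)$-approximation. Sampling follows by drawing from the truncated Gibbs measure on the enumerated list, whose total variation distance from $\mu_{G,\lambda}$ is at most the tail mass.

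\textbf{Main obstacle.} The delicate step is Step~2: making the bound $|I|\le n/2-\Omega(k)$ hold cleanly. Step~1 must also be arranged so that the encoded defects are exactly the quantities charged in Step~2. Tree vertices of high degree can carry several unoccupied children, so I would take $T$ to be a Hamiltonian-like path decomposition: a path cover of $T$ by $O(n/\Delta)$ vertex-disjoint paths. I would then count defects and breaks along these paths, absorbing the $O(n/\Delta)$ path endpoints into the constant by enlarging $\lambda_0$.
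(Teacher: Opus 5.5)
There is a genuine gap in Step~2, and the repair you sketch does not close it. The inequality $|I|\le \frac n2-\frac k2+O(\#\text{comp})$, with $k$ the number of tree defects, is false for general spanning trees. Occupied vertices can hang as pendant leaves off a long unoccupied run of $T$, so they are not separated by defects the way they are along a path. For example, take $\Delta=4$ and vertices $u_i,x_i,w_i$ for $1\le i\le m$, with $m\ge3$. Let $T$ be the path $u_1\cdots u_m$ together with the pendant edges $u_ix_i$ and $u_iw_i$. Complete $T$ to a connected $4$-regular graph $G$ by adding a $3$-regular bipartite graph between $\{w_i\}$ and $\{x_i\}$, plus the edge $u_1u_m$. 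Then $I=\{w_1,\dots,w_m\}$ is independent, with $|I|=m=n/3$ and $k=2m-1$ tree defects, whereas your bound would give $|I|\le m/2+O(1)$.

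Passing to a path cover does give a valid inequality, $|I|\le (n-k+p)/2$ for $p$ vertex-disjoint paths. However, the endpoint term sits in the exponent of $\lambda$: the tail ratio becomes roughly $\binom{n}{\beta n}\lambda^{\eta n-(\beta n-p)/2}$. Enlarging $\lambda_0$ makes this worse, not better, unless $p<(\beta-2\eta)n$. But $\beta\to0$ as $\varepsilon\to0$, while a connected $\Delta$-regular graph can have path-cover number linear in $n$: each pendant block attached by a bridge forces a path endpoint. So this fails exactly in the regime the corollary is about. The $O(\#\text{comp})$ term, which you silently dropped in Step~3, has the same defect. Separately, enumerating $2^{\#\text{comp}}$ root bits jointly can exceed $(1+\varepsilon)^n$ in running time, so the per-component product you mention as a fallback is actually required.

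The missing idea is to use the regularity of $G$ rather than the shape of $T$. Since $I$ is independent, every occupied vertex sends all $\Delta$ of its edges to unoccupied vertices. Hence the number of edges of $G$ with both ends unoccupied is exactly $d(I)=\Delta(n/2-|I|)$. Tree defects are defects, so $k$ tree defects force $|I|\le n/2-k/\Delta$, for every spanning tree and with no boundary or component term.

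With this the argument closes as in the paper:
\begin{itemize}
\item Pick $a$ with $h(a)<\ln(1+\varepsilon)$ and set $\eta=a/(2\Delta)$. Your choice $\eta=\beta/4$ is too large once $\Delta\ge4$: a maximum independent set may then have $\Delta\eta n\ge\beta n$ defects and could itself be discarded.
\item Every set with more than $an$ tree defects satisfies $\alpha(G)-|I|>an/(2\Delta)$.
\item Bound the number of omitted sets crudely by $2^n$. This gives relative error at most $2^n\lambda^{-an/(2\Delta)}\le2^{-n}$ for $\lambda\ge4^{2\Delta/a}$.
\item For disconnected $G$, enumerate each component with the global budget $\lfloor an\rfloor$ and multiply the estimates. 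Any omitted set then has more than $an$ defects in total, so only the global promise on $\alpha(G)$ is needed, never a per-component one.
\end{itemize}
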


A natural question is whether the condition that $\frac{\alpha(G)}{n} \approx \frac{1}{2}$ can be replaced by a suitable assumption on optimization. The preceding example shows that efficient optimization
alone is insufficient, but is artificial in the sense that we made
optimization easy by adding vertices. This motivates asking whether a
stronger optimization guarantee can yield fast approximate counting
and sampling.

\begin{question}
Given a polynomial-time oracle for maximum independent set
on every induced subgraph, can we obtain a $(1+\delta)$-factor approximation
to $Z_G(\lambda)$ and a $\delta$-approximate sampler for $\mu_{G,\lambda}$
on the class $\mathcal G_\Delta$ in time
$(1+\varepsilon)^n\poly(n,1/\delta)$, for every
$\delta\in(0,1)$ and $\varepsilon>0$ at sufficiently large fugacity?
\end{question}

The above formulation is natural due to the role of self-reducibility in approximate sampling and counting. Jerrum, Valiant, and Vazirani,
and subsequently Sinclair and Jerrum, established that randomized
approximate counting and almost-uniform sampling are equivalent for self-reducible problems~\cite{JVV86,SJ89}.
These reductions rely on access to counting or sampling on conditioned
instances. On the hard-core model, this corresponds to induced subgraphs.

One natural setting for the above question is perfect graphs. Every induced subgraph of a perfect graph is perfect, and furthermore, maximum independent set can be solved in polynomial time on this class~\cite{GLS1984}.

\begin{question}
Does Corollary~\ref{cor:near-half-extension} extend to all
$\Delta$-regular perfect graphs, without the assumption that
$\frac{\alpha(G)}{n}$ is close to $\frac{1}{2}$?
\end{question}

\section{Bipartite versus general graphs under Gap-ETH}
\label{sec:gapeth-separation}

\subsection{Approximation algorithm for bipartite graphs}
\label{sec:upper}

Before we describe the algorithm underlying Theorem \ref{thm:upper}, we introduce the ``cut'' representation of the hard-core model on bipartite graphs. Let $G=(L\sqcup R,E)$ be a $\Delta$-regular bipartite graph. For an independent set $I$, define
\[
z_v=\begin{cases}
\mathbf1_{\{v\in I\}},&v\in L,\\
1-\mathbf1_{\{v\in I\}},&v\in R.
\end{cases}
\]

These labels express the standard bipartite order for the hard-core model
as coordinatewise order~\cite[Section~3]{vandenBergSteif1994}.
In this encoding, the measure satisfies the FKG lattice condition and hence is positively associated by the FKG inequality~\cite{FKG1971}.

Let $\mathcal F_G$ denote the set of binary labelings which correspond to a feasible independent set, and write
$c(z)=\sum_{uv\in E}\mathbf{1}_{\{z_u\neq z_v\}}$.
The only permitted disagreeing pair is $(0,1)$, corresponding to
neither endpoint belonging to $I$. Since $I$ is independent and
$G$ is $\Delta$-regular,
\[
    c(z)=|E|-\Delta|I|=\Delta(n/2-|I|).
\]
Expressing $\lambda^{|I|}$ in terms of $c(z)$ therefore gives
\[
    Z_G(\lambda)
    =\lambda^{n/2}
      \sum_{z\in\mathcal F_G}
      e^{-(\log\lambda/\Delta)c(z)}.
\]

Call an edge a \emph{defect} of an independent set $I$ if neither endpoint belongs to $I$.
We first assume that $G$ is connected and handle disconnected graphs at the end of the proof. Fix a rooted spanning tree $T$ of $G$. Our key observation is that the root label and the set of defective edges of $T$ determine all vertex labels (for a feasible independent set). Thus, we can simply enumerate both root labels and all sets of at most $K=\lfloor a(n-1)\rfloor$ defective tree
edges and discard labelings that do not encode independent sets. 

To see why this works, set $a=\varepsilon/(4\ln(4/\varepsilon))$ and $F=1+\delta$. Letting $d_T(I)$ denote the number of edges of $T$ with neither endpoint in $I$, the algorithm then follows immediately by the following simple chain of inequalities.
\[
\begin{aligned}
\widehat{Z}
&=\sum_{\substack{I\in\mathcal I(G)\\ d_T(I)\le K}}\lambda^{|I|} \ge Z_G(\lambda)
-2\lambda^{n/2}\sum_{j=K+1}^{n-1}
\binom{n-1}{j}\lambda^{-j/\Delta} \ge (1-2^{-K})Z_G(\lambda) \ge F^{-1}Z_G(\lambda).
\end{aligned}
\]
The last inequality holds whenever $2^{-K}\le1-F^{-1}$.
Since we only discard positive terms, $\widehat{Z}\le Z_G(\lambda)$,
so $\widehat{Z}$ is an $F$-factor approximation.
The formal description of the algorithm is given below.

\begin{algorithm}[H]
\caption{Approximation by spanning-tree enumeration}
\label{alg:tree-enumeration}
\begin{algorithmic}[1]
\Require A connected $n$-vertex $\Delta$-regular bipartite graph
$G=(L\sqcup R,E)$, approximation factor $F>1$, a fugacity $\lambda>0$, and $a\in(0,1/2)$ with
$\lambda^{-1/\Delta}\leq a/(2e)$ and $2^{-\lfloor a(n-1)\rfloor}\leq1-F^{-1}$.
\Ensure $F^{-1}Z_G(\lambda)\leq\widehat Z\leq Z_G(\lambda)$.
\State Compute a rooted spanning tree $T$ of $G$.
\State Set $K\gets\lfloor a(n-1)\rfloor$ and $\widehat Z\gets0$.
\For{$b\in\{0,1\}$}
  \ForAll{$S\subseteq E(T)$ with $|S|\leq K$}
    \State Assign the root label $b$ and flip labels across precisely the edges in $S$.
    \State $I\gets\{\ell\in L:z_\ell=1\}\cup\{v\in R:z_v=0\}$.
    \If{$I$ is independent in $G$}
      \State $\widehat Z\gets\widehat Z+\lambda^{|I|}$.
    \EndIf
  \EndFor
\EndFor
\State \Return $\widehat Z$.
\end{algorithmic}
\end{algorithm}

\begin{proof}[Proof of Theorem~\ref{thm:upper}]

Put $N=n-1$, $t=\lambda^{-1/\Delta}$, and $K=\lfloor aN\rfloor$,
where $a\in(0,1/2)$ satisfies $t\leq a/(2e)$. Set $F=1+\delta$.
We may assume $n$ is sufficiently large that $2^{-K}\le\delta/(1+\delta)$,
since smaller instances are absorbed by the $\poly(1/\delta)$ factor.
Algorithm~\ref{alg:tree-enumeration} tests at most
$2\sum_{j=0}^K\binom Nj$ labelings, each in polynomial time. Let $h(a)=-a\ln a-(1-a)\ln(1-a)$ denote the binary entropy function. We have
\[
\sum_{j=0}^{\lfloor aN\rfloor}\binom Nj
\leq a^{-aN}(1-a)^{-(1-a)N}
\sum_{j=0}^N\binom Nj a^j(1-a)^{N-j}
=e^{Nh(a)}
\]
giving running time
$e^{nh(a)}\poly(n,1/\delta)$.
To obtain the stated bound, set $L_0=\ln(4/\varepsilon)$ and
$a_0=\varepsilon/(4L_0)$. The fugacity assumption gives
$t\leq a_0/(2e)$, while
\[
h(a_0)\leq a_0\ln(e/a_0)
=\frac{\varepsilon}{4L_0}(1+L_0+\ln L_0)
\leq\frac{\varepsilon}{2}<\ln(1+\varepsilon).
\]
Taking $a=a_0$ therefore gives running time
$(1+\varepsilon)^n\poly(n,1/\delta)$.

Now we show correctness. The root label and the set of tree defects
uniquely determine a labeling, so every retained independent set is
counted exactly once. A labeling with $j$ tree defects has $c(z)\geq j$,
and there are $2\binom Nj$ such labelings. Since the two constant
labelings are feasible, $Z_G(\lambda)\geq2\lambda^{n/2}$. We can thus bound the omitted weight by
\[
0\leq\frac{Z_G(\lambda)-\widehat Z}{Z_G(\lambda)}
\leq\sum_{j=K+1}^N\binom Nj t^j
\leq\sum_{j=K+1}^N\left(\frac{eNt}{j}\right)^j
\leq\sum_{j=K+1}^{\infty}2^{-j}
=2^{-K}.
\]
Since $2^{-K}\le\delta/(1+\delta)$, this gives
$Z_G(\lambda)/(1+\delta)\leq\widehat Z\leq Z_G(\lambda)$,
and hence the claimed $(1+\delta)$-factor approximation.

For sampling, run the enumeration with factor $F=1+\delta/2$, and let $\mathcal A$ be the kept independent sets. Compute
$W=\sum_{I\in\mathcal A}\lambda^{|I|}$ and sample $I\in\mathcal A$
with probability $\lambda^{|I|}/W$. The resulting law is
$\nu=\mu_{G,\lambda}(\,\cdot\mid\mathcal A)$, so
\[
\TV(\nu,\mu_{G,\lambda})
=1-\mu_{G,\lambda}(\mathcal A)\leq\frac{\delta}{2+\delta}\leq\delta/2.
\]
This weighted selection can be implemented in a second enumeration
pass with at most $\delta/2$ additional total variation error,
preserving the running time.

For a graph with $r$ connected components, run the connected counting
algorithm on each component with factor $1+\delta/(2r)$ and multiply
the estimates. The resulting factor is at most
\[
 \left(1+\frac{\delta}{2r}\right)^r\le e^{\delta/2}\le1+\delta.
\]
The sampling argument then follows in the exact same manner.
\end{proof}

\subsection{Hardness on general graphs}
\label{sec:lower}

We reduce gap 3-SAT to the problem of approximating $Z_G(\lambda)$ at large fugacity. The proof follows from standard reduction techniques, but adapted to the counting setting.

\begin{proof}[Proof of Theorem~\ref{thm:lower}]
We shall reduce from the bounded-occurrence form of Gap-ETH, which is equivalent to Gap-ETH up to
a change of constants~\cite[Section 5.4]{ChalermsookEtAl2017}.

This formulation states that there exist an integer $B\geq1$ and constants $\eta\in(0,1)$ and $\rho>0$ such that no deterministic algorithm running in time $O(\exp(\rho M))$ distinguishes satisfiable $3$-CNF formulas with $M$ clauses from those in which every assignment satisfies at most $(1-\eta)M$ clauses, even when each variable has at most $B$ literal occurrences. Under randomized Gap-ETH, the corresponding statement holds
for randomized algorithms with success probability at least $2/3$.

Fix these constants, and let $\Phi$ be such a bounded-occurrence instance with variables $x_1,\ldots,x_N$ and clauses $C_1,\ldots,C_M$, where $C_i=(\ell_{i,1}\vee\ell_{i,2}\vee\ell_{i,3})$ and each literal $\ell_{i,j}$ is either $x_t$ or $\neg x_t$ for some $t\in[N]$.

Construct a graph $H$ with vertex set
$V(H)=\{v_{i,j}:i\in[M],\,j\in[3]\}$, where $v_{i,j}$ represents
the literal occurrence $\ell_{i,j}$. We connect two distinct vertices $v_{i,j}$ and $v_{i',j'}$ if $i=i'$, or if one of the literals
$\ell_{i,j}$ and $\ell_{i',j'}$ is $x_t$ and the other is $\neg x_t$
for some $t\in[N]$. Thus each clause induces a triangle, and
$|V(H)|=3M$ and $\deg_H(v)\leq B+2$.

An independent set selects at most one literal occurrence per clause
and never selects both a variable and its negation, so its selections
extend to an assignment satisfying at least that many clauses.
Conversely, choosing one true literal occurrence from each satisfied
clause gives an independent set. Hence $\alpha(H)=\MaxSAT(\Phi)$.

To make $H$ regular, we use the following graph $Q$. Start with
the complete graph on vertices $\{p,a,b,u_1,\ldots,u_{\Delta-1}\}$,
and delete the edges $pa$, $pb$, and
$u_{2i-1}u_{2i}$ for $i=1,\ldots,(\Delta-1)/2$.
Then $p$ has degree $\Delta-1$, while every other vertex has
degree $\Delta$. The deleted edges form a three-vertex path and
a matching, so no three vertices of $Q$ are pairwise nonadjacent.
Since $\{u_1,u_2\}$ is independent and avoids $p$, we have
$\alpha(Q)=\alpha(Q-p)=2$.

Attach
$\Delta-\deg_H(v)$ fresh copies of $Q$ to each $v\in V(H)$, joining
$v$ to $p$ in each copy. This produces a $\Delta$-regular graph $G$
in polynomial time. Writing $s\leq3\Delta M$ for the number of copies,
we have
\[
n=3M+(\Delta+2)s\leq CM,
\qquad
\alpha(G)=\MaxSAT(\Phi)+2s,
\]
as every independent set of $H$ extends by two nonport vertices from each copy, and no copy contributes more than two vertices.

For counting, put $k=M+2s$ and use
$\lambda^{\alpha(G)}\leq Z_G(\lambda)\leq2^n\lambda^{\alpha(G)}$.
If $\Phi$ is satisfiable, then $Z_G(\lambda)\geq\lambda^k$. If every assignment to $\Phi$ satisfies at most $(1-\eta)M$ clauses, then
\[
Z_G(\lambda)\leq2^n\lambda^{k-\eta M}
\leq2^{CM}\lambda^{-\eta M}\lambda^k
\leq2^{-3CM}\lambda^k,
\]
since $\lambda\geq2^{4C/\eta}$. Thus an $11/10$-factor approximation
is at least $(10/11)\lambda^k>3\lambda^k/4$ in the first case and at most
$(11/10)2^{-3CM}\lambda^k<3\lambda^k/4$ in the second.

An $\exp(cn)\poly(n)$-time approximation algorithm would therefore
solve the promise problem in time
\[
\exp(cn)\poly(n)\leq\exp(cCM)\poly(M)
=\exp(\rho M/2)\poly(M)=O(\exp(\rho M)),
\]
contradicting Gap-ETH. The same argument applies to randomized
counting under randomized Gap-ETH.

For sampling, fix $\gamma\in(\eta/2,\eta)$ and accept an
output $I$ if it is independent and $|I|>k-\gamma M$. In the
unsatisfiable promised case, $\alpha(G)\leq k-\eta M$, so the test
always rejects. In the satisfiable case, $Z_G(\lambda)\geq\lambda^k$
and
\[
\Pr_{I\sim\mu_{G,\lambda}}[|I|\leq k-\gamma M]
\leq2^n\lambda^{-\gamma M}\leq2^{-CM}.
\]
A sampler with total variation error at most $1/10$ therefore gives
acceptance probability at least $1-2^{-CM}-1/10>2/3$. The test takes
polynomial time, so the preceding running-time bound contradicts
randomized Gap-ETH.
\end{proof}

\section{Bipartite versus general graphs under \texorpdfstring{$\mathrm P\ne\mathrm{NP}$}{P != NP}}
\label{sec:coarse-separation-proof}

\subsection{General approximation algorithm}
\label{sec:coarse-approximation}

We begin with a bound for connected regular graphs whose independence
number is known, using the same spanning-tree encoding as in the
previous section.

\begin{lemma}\label{thm:coarse-general}
Fix $\Delta\ge1$ and a fugacity $\lambda\ge1$. Let $G$ be
a connected $n$-vertex $\Delta$-regular graph, and suppose
$\alpha=\alpha(G)$ is known. Put
\[
 L=\lambda^\alpha,\qquad
 U=2\lambda^{n/2}(1+\lambda^{-1/\Delta})^{n-1}.
\]
Then $L\le Z_G(\lambda)\le U$, and the geometric midpoint
$\widehat Z=\sqrt{LU}$ is an $F$-factor approximation with
\[
 F=\sqrt{\frac UL}
 =\sqrt2\,\lambda^{(n/2-\alpha)/2}
 (1+\lambda^{-1/\Delta})^{(n-1)/2}.
\]
\end{lemma}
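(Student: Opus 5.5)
The plan is to prove the sandwich $L\le Z_G(\lambda)\le U$ directly and then read off the approximation factor. The lower bound is immediate. A maximum independent set $I^\ast$ with $|I^\ast|=\alpha$ contributes the single term $\lambda^{\alpha}$ to $Z_G(\lambda)$, and every term is positive, so $Z_G(\lambda)\ge L$.

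For the upper bound, I reuse the spanning-tree encoding of Section~\ref{sec:upper}. The one change is that $G$ need not be bipartite, so the labels are defined relative to the tree rather than the graph.
\begin{itemize}
\item Fix a rooted spanning tree $T$ of $G$ (possible since $G$ is connected). Since $T$ is bipartite, take its proper $2$-colouring $V=L_T\sqcup R_T$.
\item For an independent set $I$, set $z_v=\mathbf 1_{\{v\in I\}}$ for $v\in L_T$ and $z_v=1-\mathbf 1_{\{v\in I\}}$ for $v\in R_T$.
\item Every tree edge $uv$ has one endpoint in each class. The disagreement that would put both endpoints in $I$ is forbidden, because $uv$ is also an edge of $G$. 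So $z_u\ne z_v$ exactly when $uv$ is a tree defect, i.e.\ neither endpoint lies in $I$.
\item As before, the root label together with the set $S\subseteq E(T)$ of tree defects determines $z$, and hence $I$. Thus at most $2\binom{n-1}{j}$ independent sets have exactly $j$ tree defects.
\end{itemize}

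Next, I bound the weight of an independent set in terms of its number of tree defects, using regularity instead of bipartiteness. No edge of $G$ has both endpoints in $I$. Hence each edge has either exactly one endpoint in $I$ (there are exactly $\Delta|I|$ such edges) or none. This gives
\[
\Delta n/2=|E|=\Delta|I|+\#\{\text{defect edges of }G\}.
\]
Every tree defect is a defect of $G$, so an independent set with $j$ tree defects satisfies $|I|\le n/2-j/\Delta$. Since $\lambda\ge1$, this yields $\lambda^{|I|}\le\lambda^{n/2}\lambda^{-j/\Delta}$. Summing over $j$ with the count above gives
\[
Z_G(\lambda)\le 2\lambda^{n/2}\sum_{j=0}^{n-1}\binom{n-1}{j}\lambda^{-j/\Delta}=2\lambda^{n/2}(1+\lambda^{-1/\Delta})^{n-1}=U
\]
by the binomial theorem.

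Finally, from $L\le Z_G(\lambda)\le U$ the geometric midpoint satisfies
\[
\widehat Z/Z_G(\lambda)\le \sqrt{LU}/L=\sqrt{U/L}
\quad\text{and}\quad
Z_G(\lambda)/\widehat Z\le U/\sqrt{LU}=\sqrt{U/L}.
\]
So $\widehat Z$ is an $F$-factor approximation with $F=\sqrt{U/L}$, and substituting $L$ and $U$ gives the displayed formula.

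The only genuinely new point relative to Theorem~\ref{thm:upper} is the non-bipartite setting. Using the tree's own bipartition makes the injectivity of the encoding go through. Replacing the identity $c(z)=\Delta(n/2-|I|)$ by the edge count above (which uses only independence and $\Delta$-regularity) gives the weight bound. I expect no further obstacles.
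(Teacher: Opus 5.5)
Your proposal is correct and follows essentially the same route as the paper: the lower bound comes from a maximum independent set. The upper bound counts at most $2\binom{n-1}{j}$ independent sets with $j$ tree defects via the root-plus-defects encoding, then applies the regularity identity $d(I)=\Delta(n/2-|I|)$ with $\lambda\ge1$ and the binomial theorem, followed by the geometric-midpoint argument. Labelling by the spanning tree's own $2$-colouring is just a tidy way of justifying the injectivity the paper asserts directly (occupancies alternate across non-defective tree edges, and both endpoints are unoccupied across defective ones).
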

\begin{proof}
A maximum independent set contributes $L$ to the partition function,
giving the lower bound. For the upper bound, choose a rooted spanning
tree $T$ of $G$.

The tree has $n-1$ edges, and the root occupancy and tree defects
determine an independent set. Thus at most $2\binom{n-1}{j}$
independent sets have $j$ tree defects. Let $d(I)$ be the total number
of defects of $I$. Regularity gives $d(I)=\Delta(n/2-|I|)$, and a set
with $j$ tree defects has $d(I)\ge j$. Since $\lambda\ge1$, its weight
is at most
$\lambda^{n/2}(\lambda^{-1/\Delta})^j$. Summing over $j$ gives
\begin{equation}\label{eq:coarse-tree-bound}
 Z_G(\lambda)\le
  2\lambda^{n/2}\sum_{j=0}^{n-1}\binom{n-1}{j}
 (\lambda^{-1/\Delta})^j
 =2\lambda^{n/2}(1+\lambda^{-1/\Delta})^{n-1}.
\end{equation}
For every $Z\in[L,U]$, both $Z/\sqrt{LU}$ and $\sqrt{LU}/Z$ are at most $\sqrt{U/L}$, giving the approximation guarantee.
\end{proof}

The proof of Theorem \ref{thm:poly-alg} now follows immediately.

\begin{proof}[Proof of Theorem~\ref{thm:poly-alg}]
As the hard-core model factorizes over connected components, we may assume that $G$
is connected. Put $L=2\lambda^{n/2}$. The two bipartition classes have
size $n/2$, giving $Z_G(\lambda)\ge L$.
Together with \eqref{eq:coarse-tree-bound} and
$\lambda^{-1/\Delta}\le\varepsilon$, this gives
\[
 L\le Z_G(\lambda)
 \le L(1+\lambda^{-1/\Delta})^{n-1}
 \le L(1+\varepsilon)^n.
\]
Taking $F=(1+\varepsilon)^n$, we have $Z_G(\lambda)/F\le L\le Z_G(\lambda)$,
so outputting $L$ gives an $F$-factor approximation. For a disconnected
graph, the component factors multiply to $(1+\varepsilon)^{\sum_i n_i}
=(1+\varepsilon)^n$, where $n_i$ are the component sizes.
\end{proof}

\subsection{Hardness on general graphs}
\label{sec:coarse-hardness}

Our proof will follow by transferring a gap in the independence number of cubic graphs to $\Delta$-regular graphs. At sufficiently large fugacity, the resulting gap in the partition function will exceed the allowed approximation factor.

\begin{proof}[Proof of Theorem~\ref{thm:poly-hard}]
We use the following consequence of Chleb\'ik--Chleb\'ikov\'a
\cite[Theorems 2, 16--17 and Corollary 18]{ChlebikChlebikova2006}: there exist fixed $a>b>0$ with $a-b\ge5/1936$ such that distinguishing
$\alpha(G)>an$ from $\alpha(G)<bn$ is NP-hard on simple cubic graphs.
For completeness, their fixed-gadget construction has
$n=t(2r+16)$ and $\alpha(G)=t(r+6+\mathrm{OPT})$, where
$r\le40.4$ is fixed. Applying their source gap
$\mathrm{OPT}>7/8$ versus $\mathrm{OPT}<5/8$ gives
\[
 a=\frac{r+55/8}{2r+16},\qquad
 b=\frac{r+53/8}{2r+16},\qquad
 a-b=\frac1{4(2r+16)}\ge\frac5{1936}>\frac1{400}.
\]
We raise the degree from $3$ to $\Delta$ using a gadget with
two attachment vertices. Let $h=\Delta+3\ge6$, and obtain $Q$ by
adding the chord $03$ to the cycle $0,1,\ldots,h-1,0$ and then
taking the complement.
Vertices $0,3$ are ports of degree $\Delta-1$ in $Q$ and all others have degree $\Delta$. The cycle with the added chord is
triangle-free, since $0,3$ have disjoint cycle-neighbor sets
$\{1,h-1\},\{2,4\}$. Thus $\alpha(Q)\le2$, and $\{1,2\}$ is an
independent set of size two avoiding both ports.

As the cubic graph $G$ has even cardinality, pair its vertices arbitrarily. For each pair $x,y$, attach $\Delta-3$ fresh copies of $Q$, joining port $0$ to $x$ and port
$3$ to $y$ in each copy. The resulting graph $H$ is simple and
$\Delta$-regular. With $t=n(\Delta-3)/2$ gadgets,
\[
 N:=|V(H)|=n+(\Delta+3)t=\frac{\Delta^2-7}{2}n,
 \qquad \alpha(H)=\alpha(G)+2t.
\]
Indeed, any independent set uses at most two vertices per gadget,
and any independent set of $G$ extends by selecting $\{1,2\}$ in
every gadget. Now, set $A'=an+2t$, $B'=bn+2t$. The promises become
$\alpha(H)>A'$ and $\alpha(H)<B'$, with
$A'-B'=(a-b)n>n/400=N/C_\Delta$. For fixed $\Delta$, this construction has
linear size and takes polynomial time.

It remains to recover this gap from an approximate count.
Since $\lambda>1$, a maximum independent set contributes
$\lambda^{\alpha(H)}$, and each of at most $2^N$ independent sets
contributes no more. Therefore
$\lambda^{\alpha(H)}\le Z_H(\lambda)\le2^N\lambda^{\alpha(H)}$.
An $F=(1+\varepsilon)^N$-factor approximation consequently satisfies
\[
 \widehat Z>\lambda^{A'}/F\quad\text{in the yes case},\qquad
 \widehat Z<F2^N\lambda^{B'}\quad\text{in the no case}.
\]
The fugacity assumption makes these intervals disjoint, since
\[
 (A'-B')\log\lambda>
 \frac{N}{C_\Delta}\log\lambda>
 N\bigl(\log2+2\log(1+\varepsilon)\bigr).
\]
To implement the comparison for any fixed $\lambda$, note that
$A'$, $B'$, and $N$ are fixed multiples of $n$, so
$a'=A'/N$ and $b'=B'/N$ are independent of the input. Choose a fixed $q$ satisfying
\[
 2(1+\varepsilon)\lambda^{b'}<q<
 \frac{\lambda^{a'}}{1+\varepsilon},
\]
which exists by the preceding inequality. Raising to the $N$th power gives
\[
 F2^N\lambda^{B'}<q^N<\lambda^{A'}/F.
\]
Thus every valid estimate lies above $q^N$ in the yes case and below
it in the no case. Since $q$ is fixed, the threshold
$q^N$ can be computed in polynomial time. Accepting exactly when
$\widehat Z>q^N$ therefore decides the original NP-hard promise problem.
\end{proof}

\section{Acknowledgements}
The author thanks Pedro Paredes, as the conjecture that $\#\mathrm{BIS}$ could be separated from $\#\mathrm{IS}$ under a variant of the Exponential Time Hypothesis first arose during a conversation while visiting Princeton. The author is also grateful to Will Perkins for helpful feedback on initial versions of this work and for many insightful conversations about $\#\mathrm{BIS}$ over the last year.

\paragraph{Statement of AI Use.}
The positive algorithmic results and proofs, based on leveraging spanning trees, are entirely due to the author. ChatGPT 6 Astra and Codex were used to assist with the hardness proofs and with typesetting, polishing, and verifying the correctness of the results. The author assumes responsibility for all content.

\bibliographystyle{alpha}
\bibliography{ref}

@article{cai,
  author        = {Cai, Jin-Yi and Galanis, Andreas and Goldberg, Leslie Ann and Guo, Heng and
                   Jerrum, Mark and {\v{S}}tefankovi{\v{c}}, Daniel and Vigoda, Eric},
  title         = {{\#BIS}-hardness for 2-spin systems on bipartite bounded degree graphs in
                   the tree non-uniqueness region},
  journal       = {Journal of Computer and System Sciences},
  volume        = {82},
  number        = {5},
  pages         = {690--711},
  year          = {2016},
  doi           = {10.1016/j.jcss.2015.11.009}
}

@misc{ChalermsookEtAl2017,
  author        = {Chalermsook, Parinya and Cygan, Marek and Kortsarz, Guy and Laekhanukit,
                   Bundit and Manurangsi, Pasin and Nanongkai, Danupon and Trevisan, Luca},
  title         = {From {Gap-ETH} to {FPT}-inapproximability: {Clique}, {Dominating Set}, and
                   more},
  year          = {2017},
  eprint        = {1708.04218},
  archivePrefix = {arXiv},
  url           = {https://arxiv.org/abs/1708.04218},
  howpublished  = {\href{https://arxiv.org/abs/1708.04218}{arXiv:1708.04218}},
  note          = {See Section 5.4 for the bounded-occurrence formulation of Gap-ETH}
}

@inproceedings{CLY23,
  author        = {Chen, Xiaoyu and Liu, Jingcheng and Yin, Yitong},
  title         = {Uniqueness and rapid mixing in the bipartite hardcore model (extended
                   abstract)},
  booktitle     = {Proceedings of the 64th Annual IEEE Symposium on Foundations of Computer
                   Science (FOCS 2023)},
  pages         = {1991--2005},
  publisher     = {IEEE},
  year          = {2023},
  doi           = {10.1109/FOCS57990.2023.00121}
}

@inproceedings{CCCYZ25,
  author        = {Chen, Xiaoyu and Chen, Zejia and Chen, Zongchen and Yin, Yitong and Zhang,
                   Xinyuan},
  title         = {Rapid mixing on random regular graphs beyond uniqueness},
  booktitle     = {Proceedings of the 66th Annual IEEE Symposium on Foundations of Computer
                   Science (FOCS 2025)},
  pages         = {2170--2193},
  publisher     = {IEEE},
  year          = {2025},
  doi           = {10.1109/FOCS63196.2025.00115}
}

@article{ChlebikChlebikova2006,
  author        = {Chleb{\'i}k, Miroslav and Chleb{\'i}kov{\'a}, Janka},
  title         = {Complexity of approximating bounded variants of optimization problems},
  journal       = {Theoretical Computer Science},
  volume        = {354},
  number        = {3},
  pages         = {320--338},
  year          = {2006},
  doi           = {10.1016/j.tcs.2005.11.029},
  url           = {https://pure.port.ac.uk/ws/portalfiles/portal/1887750/3DM_JOURNAL_revision_old.pdf},
  note          = {Theorem numbering cited in the text follows the linked
                   \href{https://pure.port.ac.uk/ws/portalfiles/portal/1887750/3DM_JOURNAL_revision_old.pdf}{author manuscript}}
}

@misc{Dinur2016,
  author        = {Dinur, Irit},
  title         = {Mildly exponential reduction from gap {3SAT} to polynomial-gap label-cover},
  year          = {2016},
  url           = {https://eccc.weizmann.ac.il/report/2016/128/},
  howpublished  = {\href{https://eccc.weizmann.ac.il/report/2016/128/}{Electronic Colloquium
                   on Computational Complexity, Report TR16-128}}
}

@article{DGGJ04,
  author        = {Dyer, Martin and Goldberg, Leslie Ann and Greenhill, Catherine and Jerrum,
                   Mark},
  title         = {The relative complexity of approximate counting problems},
  journal       = {Algorithmica},
  volume        = {38},
  number        = {3},
  pages         = {471--500},
  year          = {2004},
  doi           = {10.1007/s00453-003-1073-y}
}

@article{FKG1971,
  author        = {Fortuin, C. M. and Kasteleyn, P. W. and Ginibre, J.},
  title         = {Correlation inequalities on some partially ordered sets},
  journal       = {Communications in Mathematical Physics},
  volume        = {22},
  pages         = {89--103},
  year          = {1971},
  doi           = {10.1007/BF01651330}
}

@article{gsvhc,
  author        = {Galanis, Andreas and {\v{S}}tefankovi{\v{c}}, Daniel and Vigoda, Eric},
  title         = {Inapproximability of the partition function for the antiferromagnetic
                   {Ising} and hard-core models},
  journal       = {Combinatorics, Probability and Computing},
  volume        = {25},
  number        = {4},
  pages         = {500--559},
  year          = {2016},
  doi           = {10.1017/S0963548315000401}
}

@article{GLS1984,
  author        = {Gr{\"o}tschel, Martin and Lov{\'a}sz, L{\'a}szl{\'o} and Schrijver,
                   Alexander},
  title         = {Polynomial algorithms for perfect graphs},
  journal       = {Annals of Discrete Mathematics},
  volume        = {21},
  pages         = {325--356},
  year          = {1984},
  url           = {https://www.zib.de/userpage/groetschel/pubnew/paper/groetschellovaszschrijver1984b.pdf}
}

@article{JKP20,
  author        = {Jenssen, Matthew and Keevash, Peter and Perkins, Will},
  title         = {Algorithms for {{\#BIS}}-hard problems on expander graphs},
  journal       = {SIAM Journal on Computing},
  volume        = {49},
  number        = {4},
  pages         = {681--710},
  year          = {2020},
  doi           = {10.1137/19M1286669}
}

@article{JPP23,
  author        = {Jenssen, Matthew and Perkins, Will and Potukuchi, Aditya},
  title         = {Approximately counting independent sets in bipartite graphs via graph
                   containers},
  journal       = {Random Structures \& Algorithms},
  volume        = {63},
  number        = {1},
  pages         = {215--241},
  year          = {2023},
  doi           = {10.1002/rsa.21145}
}

@article{JVV86,
  author        = {Jerrum, Mark and Valiant, Leslie G. and Vazirani, Vijay V.},
  title         = {Random generation of combinatorial structures from a uniform distribution},
  journal       = {Theoretical Computer Science},
  volume        = {43},
  pages         = {169--188},
  year          = {1986},
  doi           = {10.1016/0304-3975(86)90174-X},
  url           = {https://doi.org/10.1016/0304-3975(86)90174-X}
}

@misc{KOGT26,
  author        = {Kocurek, Nicholas and {Oveis Gharan}, Shayan and Tjowasi, Dante},
  title         = {Sampling from the hardcore model on random regular bipartite graphs above
                   the uniqueness threshold},
  year          = {2026},
  eprint        = {2604.21847},
  archivePrefix = {arXiv},
  url           = {https://arxiv.org/abs/2604.21847},
  howpublished  = {\href{https://arxiv.org/abs/2604.21847}{arXiv:2604.21847}}
}

@inproceedings{LL15,
  author        = {Liu, Jingcheng and Lu, Pinyan},
  title         = {{FPTAS} for {{\#BIS}} with degree bounds on one side},
  booktitle     = {Proceedings of the 47th Annual ACM Symposium on Theory of Computing (STOC
                   2015)},
  pages         = {549--556},
  publisher     = {ACM},
  year          = {2015},
  doi           = {10.1145/2746539.2746598}
}

@inproceedings{ManurangsiRaghavendra2016,
  author        = {Manurangsi, Pasin and Raghavendra, Prasad},
  title         = {A birthday repetition theorem and complexity of approximating dense {CSPs}},
  booktitle     = {44th International Colloquium on Automata, Languages, and Programming
                   (ICALP 2017)},
  series        = {Leibniz International Proceedings in Informatics (LIPIcs)},
  volume        = {80},
  pages         = {78:1--78:15},
  publisher     = {Schloss Dagstuhl--Leibniz-Zentrum f{\"u}r Informatik},
  year          = {2017},
  doi           = {10.4230/LIPIcs.ICALP.2017.78},
  url           = {https://drops.dagstuhl.de/entities/document/10.4230/LIPIcs.ICALP.2017.78}
}

@article{MWW09,
  author        = {Mossel, Elchanan and Weitz, Dror and Wormald, Nicholas},
  title         = {On the hardness of sampling independent sets beyond the tree threshold},
  journal       = {Probability Theory and Related Fields},
  volume        = {143},
  number        = {3--4},
  pages         = {401--439},
  year          = {2009},
  doi           = {10.1007/s00440-007-0131-9}
}

@misc{NP,
  author        = {Narang, Ijay and Perkins, Will},
  title         = {The hard-core model on bipartite spectral expanders: Counting and sampling
                   at all fugacities},
  year          = {2026},
  eprint        = {2608.03848},
  archivePrefix = {arXiv},
  url           = {https://arxiv.org/abs/2608.03848},
  howpublished  = {\href{https://arxiv.org/abs/2608.03848}{arXiv:2608.03848}}
}

@article{SJ89,
  author        = {Sinclair, Alistair and Jerrum, Mark},
  title         = {Approximate counting, uniform generation and rapidly mixing {Markov} chains},
  journal       = {Information and Computation},
  volume        = {82},
  number        = {1},
  pages         = {93--133},
  year          = {1989},
  doi           = {10.1016/0890-5401(89)90067-9},
  url           = {https://people.eecs.berkeley.edu/~sinclair/approx.pdf}
}

@inproceedings{Sly10,
  author        = {Sly, Allan},
  title         = {Computational transition at the uniqueness threshold},
  booktitle     = {Proceedings of the 51st Annual IEEE Symposium on Foundations of Computer
                   Science (FOCS 2010)},
  pages         = {287--296},
  publisher     = {IEEE},
  year          = {2010},
  doi           = {10.1109/FOCS.2010.34}
}

@article{SS14,
  author        = {Sly, Allan and Sun, Nike},
  title         = {Counting in two-spin models on $d$-regular graphs},
  journal       = {The Annals of Probability},
  volume        = {42},
  number        = {6},
  pages         = {2383--2416},
  year          = {2014},
  doi           = {10.1214/13-AOP888}
}

@article{Valiant79,
  author        = {Valiant, Leslie G.},
  title         = {The complexity of enumeration and reliability problems},
  journal       = {SIAM Journal on Computing},
  volume        = {8},
  number        = {3},
  pages         = {410--421},
  year          = {1979},
  doi           = {10.1137/0208032}
}

@article{vandenBergSteif1994,
  author        = {van den Berg, J. and Steif, Jeffrey E.},
  title         = {Percolation and the hard-core lattice gas model},
  journal       = {Stochastic Processes and their Applications},
  volume        = {49},
  number        = {2},
  pages         = {179--197},
  year          = {1994},
  doi           = {10.1016/0304-4149(94)90132-5},
  url           = {https://ir.cwi.nl/pub/1496/1496D.pdf}
}

@inproceedings{Weitz06,
  author        = {Weitz, Dror},
  title         = {Counting independent sets up to the tree threshold},
  booktitle     = {Proceedings of the 38th Annual ACM Symposium on Theory of Computing (STOC
                   2006)},
  pages         = {140--149},
  publisher     = {ACM},
  year          = {2006},
  doi           = {10.1145/1132516.1132538}
}

\appendix
\section{Regular graphs with independence ratio near one half}
\label{app:tree-extensions}

For completeness, we give the algorithm and proof underlying Corollary \ref{cor:near-half-extension}. The algorithm is almost identical to Algorithm \ref{alg:tree-enumeration}.

\begin{algorithm}[H]
\caption{Approximation by spanning-tree enumeration on regular graphs}
\label{alg:near-half-enumeration}
\begin{algorithmic}[1]
\Require A connected $n$-vertex $\Delta$-regular graph $G$, $\Delta\ge1$,
approximation factor $F>1$ satisfying $2^{-n}\le1-F^{-1}$, and $a\in(0,1/2)$ with
$\alpha(G)\ge(1/2-a/(2\Delta))n$, at fixed fugacity
$\lambda\ge4^{2\Delta/a}$.
\Ensure $F^{-1}Z_G(\lambda)\le\widehat Z\le Z_G(\lambda)$, hence an $F$-factor approximation.
\State Compute a rooted spanning tree $T$ of $G$.
\State Set $K\gets\lfloor an\rfloor$ and $\widehat Z\gets0$.
\For{$b\in\{0,1\}$}
  \ForAll{$S\subseteq E(T)$ with $|S|\le K$}
    \State Set the root occupancy to $b$ and proceed from the root to the leaves.
    \State For each parent--child edge $uv$, set $x_v=1-x_u$ if $uv\notin S$;
    otherwise require $x_u=0$ and set $x_v=0$.
    \State Discard the candidate if a requirement fails.
    \State $I\gets\{v:x_v=1\}$.
    \If{$I$ is independent in $G$}
      \State $\widehat Z\gets\widehat Z+\lambda^{|I|}$.
    \EndIf
  \EndFor
\EndFor
\State \Return $\widehat Z$.
\end{algorithmic}
\end{algorithm}

\begin{proof}[Proof of Corollary~\ref{cor:near-half-extension}]
Put $h(a)=-a\ln a-(1-a)\ln(1-a)$. Choose
$a\in(0,1/2)$ with $h(a)<\ln(1+\varepsilon)$, and set
\[
 \eta=\frac{a}{2\Delta},\qquad \lambda_0=4^{2\Delta/a}.
\]
Set $F=1+\delta$. We may assume $n$ is sufficiently large so that
$2^{-n}\le\delta/(1+\delta)$,
since smaller instances are absorbed by the $\poly(1/\delta)$ factor.
First suppose $G$ is connected. As in
Section~\ref{sec:upper}, call an edge a defect if both endpoints are
unoccupied. The root occupancy and the set of tree defects determine
all occupancies: they alternate across nondefective edges, while both
endpoints of a defective edge must be unoccupied. Thus every retained
independent set is counted exactly once.

Algorithm~\ref{alg:near-half-enumeration} tests at most
\[
 2\sum_{j=0}^{\lfloor an\rfloor}\binom{n-1}{j}
 \le2\sum_{j=0}^{\lfloor an\rfloor}\binom nj
 \le2e^{nh(a)}
\]
candidates, each in polynomial time. This gives running time
$(1+\varepsilon)^n\poly(n,1/\delta)$.

For correctness, every omitted independent set $I$ has more than $an$ tree
defects. By regularity, its total number of defects is
$d(I)=\Delta(n/2-|I|)>an$. Since
$\alpha(G)\ge(1/2-\eta)n$, this gives
\[
 \alpha(G)-|I|>n\left(\frac a\Delta-\eta\right)
 =\frac{an}{2\Delta}.
\]
There are at most $2^n$ omitted sets, and
$Z_G(\lambda)\ge\lambda^{\alpha(G)}$, so
\[
 0\le\frac{Z_G(\lambda)-\widehat Z}{Z_G(\lambda)}
 \le2^n\lambda^{-an/(2\Delta)}
 \le2^{-n}\le\frac{\delta}{1+\delta}.
\]
Consequently, $Z_G(\lambda)/(1+\delta)\le\widehat Z\le Z_G(\lambda)$,
giving the claimed $(1+\delta)$-factor approximation.

For a disconnected graph, let $G_1,\ldots,G_c$ be its components and
put $n_i=|V(G_i)|$. In each $G_i$, use the same encoding to enumerate
independent sets with at most $K=\lfloor an\rfloor$ tree defects,
where $n$ is the order of the whole graph. Let $\widehat Z_i$ be their
total weight, and return $\widehat Z=\prod_{i=1}^c\widehat Z_i$.
The total number of candidates is at most
\[
 2\sum_{i=1}^c\sum_{j=0}^{\min\{K,n_i-1\}}\binom{n_i-1}{j}
 \le2c\sum_{j=0}^K\binom nj
 \le2c e^{nh(a)},
\]
so the running-time bound is unchanged. Every omitted independent
set has more than $K$ tree defects in some component, and hence more
than $an$ defects in total. The preceding tail bound therefore applies
using the independence-ratio promise on $G$.

For sampling, run the enumeration with factor $F=1+\delta/2$ and sample a
retained independent set with probability proportional to its weight.
Writing $\mathcal A$ for the retained sets, the resulting law is
$\nu=\mu_{G,\lambda}(\,\cdot\mid\mathcal A)$, with
\[
 \TV(\nu,\mu_{G,\lambda})=1-\mu_{G,\lambda}(\mathcal A)
 \le\frac{\delta}{2+\delta}\le\delta/2.
\]
This weighted selection can be implemented in a second enumeration
pass with at most $\delta/2$ additional total variation error.
For disconnected $G$, sample independently in each component,
allocating $\delta/(2c)$ additional error to each selection.
This gives total error at most $\delta$ and preserves the running-time bound.
\end{proof}

\section{$\#\mathrm{BIS}$-Hardness on regular bipartite graphs}

\begin{lemma}\label{lem:Delta-reg-bis-hard}
Fix $\Delta\ge3$ and a fugacity
$\lambda>\lambda_c(\Delta)$.
Approximating $Z_G(\lambda)$ on the class $\mathcal G_{\Delta,\Delta}$
is $\BIS$-hard under approximation-preserving reductions.
\end{lemma}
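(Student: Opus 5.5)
We would derive the lemma from the result of Cai, Galanis, Goldberg, Guo, Jerrum, {\v{S}}tefankovi{\v{c}}, and Vigoda~\cite{cai}: for fixed $\Delta\ge3$ and $\lambda>\lambda_c(\Delta)$, approximating $Z_G(\lambda)$ on $\mathcal G^{\le}_{\Delta,\Delta}$ is $\BIS$-hard. The task is therefore to remove the ``maximum degree'' relaxation. We would give an approximation-preserving reduction from bipartite graphs of maximum degree at most $\Delta$ to $\Delta$-regular bipartite graphs at the same fugacity $\lambda$.

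The first thing to check is what \cite{cai} actually produces. Their reduction goes through a random bipartite $\Delta$-regular gadget (the Sly-type phase gadget) whose terminals are the only vertices of deficient degree. Inspecting that construction may already show that the output graphs are $\Delta$-regular after a small local modification. This would finish the proof immediately.

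Failing that, we would use a generic padding step. Let $G=(L\sqcup R,E)$ have maximum degree at most $\Delta$. Take $\Delta$ disjoint copies $G^{(1)},\dots,G^{(\Delta)}$. For each vertex $v$ of deficiency $k_v=\Delta-\deg_G(v)$, join the copies $v^{(1)},\dots,v^{(\Delta)}$ to $k_v$ copies of a fixed bipartite gadget. Each gadget copy is attached to vertices on the opposite side of the bipartition so that bipartiteness is kept, for instance by attaching $L$-deficiencies through a gadget whose ports lie on the $R$ side. The natural gadget is $K_{\Delta,\Delta}$ minus a perfect matching, or minus a single edge, with ports of degree $\Delta-1$. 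Each port then receives exactly one external edge, so every degree becomes exactly $\Delta$.

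The accounting step is to show that the gadget's effect on the partition function is explicitly computable and factorizes through the port occupancies. For a gadget $Q$ with ports $P$, write $Z_Q(\lambda\mid \text{ports unoccupied})$ and the analogous quantities with some ports occupied. The contribution of a gadget attached to $v$ depends only on whether $v$ is occupied. This gives a two-state ``field'' on $v$ with explicitly computable weights $w_0,w_1$, because the gadget has constant size. The result is a weighted hard-core partition function, i.e.\ a vertex-dependent fugacity. We must then remove these external fields.

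We expect this last step to be the main obstacle. An approximation-preserving reduction needs either exact cancellation or a field-preserving gadget. The cleanest way around it is to choose the gadget so that $w_1/w_0$ equals a power of $\lambda$ per deficiency. Alternatively, we would appeal to the fact that \cite{cai} (and \cite{DGGJ04}) prove $\BIS$-hardness for the bipartite hard-core model with arbitrary positive vertex fugacities, or for 2-spin systems with fields. In that case the modified activities can be absorbed into the hard instance before padding. Concretely, we would first attempt to take ports on the side opposite to $v$. A single neighbor chosen from a symmetric $K_{\Delta,\Delta}$-type gadget then yields $w_1/w_0=Z_{Q}(\text{port out})/Z_Q$, a fixed constant. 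We would then invoke the version of \cite{cai} that allows such a fixed external field on each low-degree vertex; their gadget reduction already handles fields. The polynomial size, the multiplicative computability of the correction factor, and hence the approximation preservation are routine once this is in place.
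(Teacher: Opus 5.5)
There is a genuine gap. Your outer strategy matches the paper's: start from the maximum-degree-$\Delta$ hardness of \cite{cai} and pad deficient vertices with bipartite gadgets. The proposal breaks down at exactly the step you flag, for two reasons.

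\textbf{The gadget does not induce a single-site field.} Your accounting claim that a gadget's contribution ``depends only on whether $v$ is occupied'' is false. In your construction a gadget is joined to the $\Delta$ distinct vertices $v^{(1)},\dots,v^{(\Delta)}$, so its weight depends on their joint occupancy.

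Take $K_{\Delta,\Delta}$ minus a vertex. This is the variant whose $\Delta$ ports all lie on one side, as your attachment scheme requires; neither $K_{\Delta,\Delta}$ minus a perfect matching nor minus an edge fits it. It contributes $(1+\lambda)^{|S|}+(1+\lambda)^{\Delta-1}-1$, where $S$ is the set of unoccupied $v^{(i)}$. This is not a product of single-site factors. So the copies are coupled, and $Z_H$ is no longer a constant times a power of a weighted $Z_G$.

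Some coupling of this kind is unavoidable. Counting edges from the two sides shows that the numbers of ports on the two sides of a bipartite gadget agree modulo $\Delta$. Hence no gadget can fill the $k_v<\Delta$ missing incidences at $v$ alone, and every padding gadget mediates an interaction between distinct external vertices.

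The paper handles this in three steps. It adjoins a copy of $G$ with its sides exchanged, for balance, and pairs missing incidences across the bipartition. It then inserts a chain of $k$ copies of $K_{\Delta,\Delta}-e$ for each pair. Perron--Frobenius for the positive transfer matrix $MT$ gives $W_k(i,j)=C_kr^{i+j}(1+O(\theta^k))$, and taking $k=O(\log((n+1)/\delta))$ shrinks the residual interaction to an $e^{\pm\delta/4}$ factor.

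\textbf{Even an exact field would leave you stuck.} An occupied vertex forces its gadget neighbour out, so the per-incidence ratio is strictly below $1$ (in the paper $0<r<1$) and never becomes trivial. Making it a power of $\lambda$ does not help, because the effective activity at $v$ would still depend on $k_v$. Theorem~2 of \cite{cai} concerns a uniform 2-spin system, not prescribed fields on low-degree vertices, so that appeal is unsupported.

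The missing idea is to choose the source problem so that it already carries exactly the field the padding induces. Apply Theorem~2 of \cite{cai} to the antiferromagnetic 2-spin system with interaction $\bigl(\begin{smallmatrix} r&1\\ 1&0\end{smallmatrix}\bigr)$ and activity $\lambda r^\Delta$. On a graph $G$ of maximum degree $\Delta$ with $m$ edges, its partition function is $Z_G^\star=r^m\sum_{I}\lambda^{|I|}r^{\sum_{v\in I}(\Delta-d_G(v))}$. This is the hard-core model with one factor $r$ per missing incidence at an occupied vertex, precisely what the chains produce. The system lies in the non-uniqueness region because the substitution $R=rx$ conjugates its tree recursion $\lambda r^\Delta/(r+R)^{\Delta-1}$ to the hard-core recursion at $\lambda$. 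Then $Z_H(\lambda)$ lies within a factor $e^{\pm\delta/4}$ of $C_k^sr^{-2m}(Z_G^\star)^2$, and a square root recovers $Z_G^\star$.

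Your first option, that the output of \cite{cai} might already be regular after a small modification, is a hope rather than an argument.
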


\begin{proof}
We combine Theorem~2 of~\cite{cai} with the degree-completion argument
from Section~\ref{sec:coarse-hardness}. For $r\in(0,1)$ (to be chosen
below), consider the two-spin model with interaction matrix and activity
\[
 B=\begin{pmatrix}r&1\\1&0\end{pmatrix},
 \qquad \lambda_\star=\lambda r^\Delta.
\]
To verify that this model lies in the tree non-uniqueness region, let
$R=Z(1)/Z(0)$ denote the occupation odds at the root of a subtree,
where $Z(i)$ is its partition function with the root fixed to spin $i$.
At a vertex with $\Delta-1$ children, each having occupation odds $R$,
the odds are
\[
f_r(R)=\frac{\lambda_\star}{(r+R)^{\Delta-1}}.
\]
Writing $R=rx$ and using $\lambda_\star=\lambda r^\Delta$, we obtain
\[
\frac{f_r(rx)}r=\frac{\lambda}{(1+x)^{\Delta-1}},
\]
which is the hard-core tree recursion at activity $\lambda$.
Since $\lambda>\lambda_c(\Delta)$, our model also lies in the tree
non-uniqueness region. Theorem~2 of~\cite{cai} therefore implies that
approximating its partition function $Z_G^\star$ on
$\mathcal G^{\leq}_{\Delta,\Delta}$ is $\BIS$-hard.

 For
$G\in\mathcal G^{\leq}_{\Delta,\Delta}$, write $n=|V(G)|$ and
$m=|E(G)|$. Since $B_{11}=0$, its configurations are independent sets,
and their weights give
\[
 Z_G^\star=r^m\sum_{I\in\mathcal I(G)}
 \lambda^{|I|}r^{\sum_{v\in I}(\Delta-d_G(v))}.
\]

To choose $r$ and fill missing degrees, take $Q=K_{\Delta,\Delta}-e$,
with the endpoints of $e$ as ports. Join $k$ copies by connecting each
copy's right port to the next copy's left port. The two free ports lie
on opposite sides and have degree $\Delta-1$ and all other vertices have
degree $\Delta$.

Let $T_{ij}$ be the total hard-core weight in one copy of $Q$ with
port occupancies $i,j\in\{0,1\}$, including their fugacity factors.
The port-weight and hard-core interaction matrices are
\[
 T=\begin{pmatrix}2\tau-1&\lambda\tau\\
                   \lambda\tau&\lambda^2\end{pmatrix},
 \qquad
 M=\begin{pmatrix}1&1\\1&0\end{pmatrix},
 \qquad \tau=(1+\lambda)^{\Delta-1}.
\]
Attach the remaining ports by edges to two external vertices. Including these edges but excluding the external-vertex fugacities,
the resulting gadget contributes $W_k(i,j)$ when the external
occupancies are $i,j$, where
\[
W_k=(MT)^kM.
\]

Let $A=MT$. Since $A$ has strictly positive entries, the
Perron--Frobenius theorem gives a dominant eigenvalue $\rho>0$
and a positive right eigenvector, which we normalize as
$v=(1,r)^{\mathsf T}$. Because the first row of $A$ is strictly larger than the second entrywise, $Av=\rho v$ implies $\rho>\rho r$ and thus $0<r<1$. 

Perron--Frobenius also implies that $\rho^{-k}W_k=\rho^{-k}A^kM$
converges exponentially fast to a positive rank-one matrix whose
columns are proportional to $v$. Since $W_k$ is symmetric, this limit
is a positive multiple of $vv^{\mathsf T}$.
Consequently, setting $C_k=W_k(0,0)$, we obtain
\[
W_k(i,j)=C_k r^{i+j}\bigl(1+O(\theta^k)\bigr),
\]
uniformly for $i,j\in\{0,1\}$, where $\theta\in(0,1)$ is a
constant depending only on $\Delta$ and $\lambda$.

Take the disjoint union of $G$ and a copy with its bipartition
classes exchanged. Each side then has $n$ vertices and
$s=\Delta n-2m$ missing incidences. Pair the missing incidences
across the bipartition and attach one gadget consisting of $k$
copies of $Q$ to each pair. The resulting graph $H$ is bipartite
and $\Delta$-regular.

Up to the approximation error, the gadgets contribute a common
factor $C_k^s$ and one factor $r$ for each missing incidence at an
occupied vertex. Since $s\le\Delta n$, for any $\delta\in(0,1)$
we may choose $k=O(\log((n+1)/\delta))$ so that the accumulated
multiplicative error is at most $e^{\delta/4}$. Summing over the
independent sets in the two copies of $G$ therefore gives
\[
e^{-\delta/4}C_k^s r^{-2m}(Z_G^\star)^2
\le Z_H(\lambda)\le
e^{\delta/4}C_k^s r^{-2m}(Z_G^\star)^2.
\]

The graph $H$ has $2n+2\Delta ks$ vertices, so the construction
has polynomial size. Given a $(1+\delta/2)$-factor approximation
$\widehat Z_H$ to $Z_H(\lambda)$, the estimate
$\widehat Z_G^\star=r^m\sqrt{\widehat Z_H/C_k^s}$ is a
$(1+\delta)$-factor approximation to $Z_G^\star$, completing the
approximation-preserving reduction.
\end{proof}
\end{document}